\documentclass[letterpaper, 10 pt, conference]{ieeeconf}  

\IEEEoverridecommandlockouts                              
\overrideIEEEmargins

\usepackage{graphics} 
\usepackage{epsfig} 
\usepackage{mathptmx} 
\usepackage{times} 

\usepackage{amsthm}
\usepackage{amsmath}
\usepackage{amssymb}  
\usepackage{amsfonts}
\usepackage{algpseudocode}
\usepackage{algorithm}
\usepackage{xcolor}
\usepackage{balance}
\usepackage{microtype}

\usepackage{cite}

\usepackage{graphicx}
\usepackage{float,epsf,caption,subcaption}

\newtheorem{theorem}{Theorem}

\newtheorem{corollary}{Corollary}

\theoremstyle{plain}

\theoremstyle{definition}
\newtheorem{definition}{Definition}
\newtheorem{example}{Example}
\theoremstyle{remark}
\newtheorem{remark}{Remark}

\newcommand{\R}{\mathbb{R}}
\newcommand{\END}{\hfill $\blacksquare$}

\makeatletter
\def\endthebibliography{%
	\def\@noitemerr{\@latex@warning{Empty 'thebibliography' environment}}%
	\endlist
}
\makeatother
\title{\LARGE \bf
On the Value of Preview Information For Safety Control
}

\author{Zexiang Liu, Necmiye Ozay
	\thanks{Zexiang Liu and Necmiye Ozay are with the Dept. of Electrical Engineering and Computer Science, Univ. of Michigan, Ann Arbor,	MI 48109, USA 
		{\tt\small zexiang,necmiye@umich.edu}. This work was supported by NSF Award
	CNS-1931982.}
}

\begin{document}
	
	\maketitle
	\thispagestyle{empty}
	\pagestyle{empty}

	\begin{abstract}
	Incorporating predictions of external inputs, which can otherwise be treated as disturbances, has been widely studied in control and computer science communities. These predictions are commonly referred to as preview in optimal control and lookahead in temporal logic synthesis. However, little work has been done for analyzing the value of preview information for safety control for systems with continuous state spaces. In this work, we start from showing general properties for discrete-time nonlinear systems with preview and strategies on how to determine a good preview time, and then we study a special class of linear systems, called systems in Brunovsky canonical form, and show special properties for this class of systems. In the end, we provide two numerical examples to further illustrate the value of preview in safety control.
	\end{abstract}

\section{Introduction}

 In a typical feedback control framework, the control input $u(t)$ is determined based on the current state $x(t)$, or more generally the initial state $x(0)$ and the sequence of the past disturbances\footnote{The concept of disturbance in this work can be quite general and it essentially captures any external input for which we might have predictions of future values. For instance, the reference signal in a tracking problem can be treated as ``disturbance" if error dynamics are used to include the reference signal in system equations (see examples in \cite{yu2020power, xu2019design}).} $d(0)$, $d(1)$, ..., $d(t-1)$.  However, in  this work, we allow $u(t)$ to be determined not only by $x(0)$, $d(0)$, ..., $d(t-1)$, but also by future disturbances $d(t)$, ..., $d(t+p)$, called the preview information, for some preview time $p$. This is a fair assumption in many modern control systems, enabled by the advances in sensing technologies. Examples of applying preview information in real-world systems include autonomous vehicles\cite{xu2019design}, power systems\cite{ozdemir2013design} and robotics\cite{kajita2003biped}.
 
 The above mentioned systems are all safety-critical, where controllers should be designed to ensure safety specifications. The safety specifications considered in this work are to have the system state avoid visiting a user-defined unsafe region, or equivalently have the state stay within a safe region indefinitely. A standard way to achieve safety in this sense is via robust controlled invariant sets\cite{bertsekas1972infinite, rungger2017computing}. Then, a fundamental question to ask is how to measure the improvement due to preview in safety control and how the change of preview time affects the quality of safety control.
 
 The majority of literature on preview control focuses on incorporating preview information into optimal control formulation\cite{sheridan1966three, tomizuka1975optimal, katayama1985design, xu2019design}. A prime example is model predictive control (MPC)\cite{garcia1989model, laks2011model, yu2020power}, where preview information is naturally incorporated into the state propagation constraints. In this case, the improvement due to preview is measured by the amount of cost reduction after increasing preview time. A recent work \cite{yu2020power} proves in theory that the cost reduction in both the linear quadratic control and MPC formulations decays exponentially fast as the preview time increases. However, those results are not applicable to our question, as they do not incorporate safety constraints.
 
 Our previous work addressed variants of this problem: \cite{liu2019safety} incorporates preview on mode switching into safety control of switched systems, and \cite{liu2020scalable} studies the structure of controlled invariant sets for linear systems with delay in input and preview in disturbance.  A significant implication of \cite{liu2020scalable} is that for linear systems, the negative impact of input delay to safety control can be compensated by the positive impact of preview on disturbances. But references \cite{liu2019safety, liu2020scalable} rather focus on algorithmic scalability and do not consider general systems. Therefore, they provide little theory in how different preview times affect the controlled invariant sets. 
 
 Notably, the impact of preview time is a relatively well-studied problem in reactive synthesis\cite{kupferman2011synthesis, holtmann2010degrees, klein2015much}, where preview is called lookahead. \cite{kupferman2011synthesis} provides, by checking the universal satisfiability of the linear temporal logic (LTL) formula encoding specifications, some extreme case analysis, which is analogous to our results on disturbance-collaborative systems in Section III. \cite{klein2015much} provides upper and lower bounds on the preview time necessary for the existence of a controller that realizes a LTL specification,  which sheds light on the impact of different preview times. But those results are for finite-state transition systems only. In our work, we are also interested in systems with continuous state spaces. 
 
 To summarize, to the best of our knowledge, there is little work in the literature that analyzes the value of preview for safety control of general discrete-time systems. This work is a first step in this direction. Our main contributions are: (i) We provide ways to compute inner and outer approximations of robust controlled invariant sets for general systems with preview and show how these approximation can be used to determine a good preview time. (ii) We derive a closed-form expression of the maximal controlled invariant set for systems in Brunovsky canonical form, one of the canonical forms of controllable systems, within a hyperbox safe set. Based on this closed-form expression, we characterize critical preview time over which additional preview information cannot improve safety.

In the remainder of this work, the preliminaries of controlled invariant sets and a formal definition of systems with preview are introduced in Section II. Then in Section III, we study analytical properties of the controlled invariant sets for general systems with preview and how those properties lead to strategies of selecting preview time. In Section IV, we develop the theory for systems in Brunovsky canonical form. 
After that, we illustrate the value of preview using two numerical examples in Section V and conclude the paper in Section VI. The proofs of the theorems and details of the examples can be found in Appendix.

\textbf{Notation:} For $K$ vectors ${x}_1\in \R^{n_{1}} $, ..., $ {x}_K\in \R^{n_{K}} $, we use $ ({x}_1, {x}_2, \cdots, x_{K}) $ or $x_{1:K} $ to denote their concatenation in $ \R^{ n_1 + n_2 \cdots + n_{K}} $. A single vector $ {x}\in \R^{n}$ can be also represented by $x = (x_1, x_2, \cdots, x_{n})$ where $x_{i}\in \R$ is the $i$ th entry of $x$.  We denote a closed interval between $a$ and $b$ by $[a,b]$. The sum of intervals $[a,b]$ and $[c,d]$ is denoted by $[a,b]+[c,d] = [a+c, b+d]$. Similarly, the subtraction of $[a,b]$ and $[c,d]$ is $[a,b]- [c,d] = [a-c,b-d]$. The sum of intervals $[a_{i}, b_{i}]$ for $i$ from $1$ to $n$ is denoted by $ \sum^{n}_{i=1} [a_{i}, b_{i}] = [ \sum^{n}_{i=1} a_{i}, \sum^{n}_{i=1} b_{i}]$. We also denote the sum and multiplication of a interval with a scalar by  $c+[a,b] = [c+a, c+b]$ and $ \alpha [a,b] = [ \alpha a, \alpha b]$ for $c\in \R$ and $ \alpha \geq 0$.  For either sum over scalars or intervals, we adopt the convention that $ \sum^{m}_{i=n} c_{i} =0 $ and {$ \sum^{m}_{i=n}[c_{i,1},c_{i,2}] = \emptyset $} if $n > m$. The Cartesian product of sets $X_1$, ..., $X_{n}$ are denoted by $X_1 \times \cdots \times X_{n}$ and/or $\Pi_{i=1}^{n} X_{i}$ and/or $X^{n}$ when $X_i = X$ for all $i$ from $1$ to $n$. A hyperbox $ \mathbf{B} = \{ {x} \in \R^{n} \mid x_{i}\in [c_{i, 1}, c_{i, 2}]\} $ in $\R^{n}$ is denoted by $ \mathbf{B} = \Pi_{i=1}^{n} [c_{i,1}, c_{i,2}]$.  Given a set $D \subseteq \R^{n}$ and a linear mapping $T: R^{n} \rightarrow R^{m}$, we denote the image of $D$ under $T$ by $TD = \{ T{x} \mid {x}\in D\} \subseteq R^{m} $.  Given a set $ X \subseteq  \R^{n}$, $PROJ_{j:k} (X) = \{ (x_{j}, \cdots, x_{k})  \mid (x_1, \cdots, x_{n})\in X \}$ is the projection of $X$ from $\R^{n}$ to the coordinates corresponding to $x_{j}$, ..., $x_{k}$ for $j$, $k$ with $1\leq j\leq k \leq n$.  

\section{Preliminaries} 
\label{sec:prelim} 

We consider discrete-time system $\Sigma$ in form of 
\begin{align} \label{eqn:sys} 
\Sigma: {x}(t+1) = f( {x}(t), u(t), d(t))
\end{align}
with state ${x}(t) \in \R^{n}$, control input $ {u}(t)\in \R^{m}$ and disturbance $ {d}(t) \in D\subseteq \R^{l}$.  
Let  $ S_{xu}\subseteq \R^{n+m} $ be the \emph{safe set} of $ \Sigma $ that describes safety constraints on the state-input pairs.
\begin{definition}\label{def:inv_set}
	A set $ C\subseteq \R^{n} $ is a \emph{controlled invariant set} of $\Sigma$ in safe set $ S_{xu}\subseteq \R^{n+m} $ if for all $ {x}\in C $, there exists some $  {u}\in \R^{m} $ such that $ ({x}, {u})\in S_{xu} $ and for all $ d\in D $,	$f({x}, {u}, {d})\in C$. $C_{max}$ is \emph{the maximal controlled invariant set} in $S_{xu}$ if $C_{max}$ contains any controlled invariant set of $\Sigma$ in  $S_{xu}$.
\end{definition}

For the remainder of this work, we use $C_{max}(\Sigma, S_{xu}) $ to denote the maximal controlled invariant set of system $\Sigma$ within safe set $S_{xu}$.  Given a controlled invariant set $C$, we define the \emph{admissible input set} at state ${x}$ by 
\begin{align}
\mathcal{A} (C,x) = \{ u\in \R^{m} \mid ({x},u)\in S_{xu}, f(x,u,d) \in C\}.
\end{align}
$\mathcal{A}(C,x)$ is the maximal admissible input set at $x$ when $C$ is the maximal controlled invariant set. If a set $ C $ is controlled invariant, there exists a safe controller $u_{safe}: C  \rightarrow \R^{m}$ such that any closed-loop trajectory starting from $ C $ stays in $ C $ indefinitely, robust to arbitrary disturbances in $ D $. 
A function $ u_{safe}: C \rightarrow \R^{m}	$ is a safety controller if and only if $u_{safe}(x) \in \mathcal{A}(C,x)$ for all $x\in C$. 

{In this work, we measure the conservativeness of controlled invariant sets by comparing (i) the size of the controlled invariant set, or (ii) the size of the admissible input set at a given state $ x $. There is a connection between these two measures:}
If we have controlled invariant sets $C_1$ and $C_2$ with $C_1 \subseteq C_2$, then $\mathcal{A}(C_1,x) \subseteq \mathcal{A}(C_2,x)$ for all $x\in C_1$.

To compute controlled invariant sets, we introduce the \emph{controlled predecessor operator} with respect to system $\Sigma$ as in \eqref{eqn:sys} 
\begin{align} \label{eqn:pre} 
Pre_{\Sigma}(X,S_{xu}) = \{ x \mid \exists u \text{ s.t. } (x,u)\in S_{xu},\nonumber \\
f(x,u,d) \in X, \forall d\in D\}. 
\end{align}

Define $X_0 = PROJ_{1:n} (S_{xu})$ and recursively define 
\begin{align}
X_{k} = Pre_{\Sigma} (X_{k-1},S_{xu}),\ k\geq 1. \label{eqn:pre_iterative}  
\end{align}
Under sufficient conditions in \cite{bertsekas1972infinite}, $ X_{k}$ converges to the maximal controlled invariant set $C_{max}(\Sigma,S_{xu})$.

\begin{definition}
	We call	a system $ \Sigma $ \emph{with $ p $-step preview} if the disturbances in the next $p$ steps can be measured at each time instant. In other words, the control input $u(t)$ at each time $t$ can be determined based on the state $ x(0) $ and the disturbances $ d(k) $ for $k$ from $ 0$ to $t+p-1$.
\end{definition}

For system $ \Sigma $ with $ p $-step preview, to explicitly indicate the available information on future disturbances at each time, we construct a \emph{p-augmented system} $ \Sigma_p $ with respect to system $\Sigma$ with state\footnote{We use $d_{1:p}(t)$ to denote the vector $(d_1(t), \cdots, d_{p}(t))$.} $ \xi(t)  := (x(t), d_{1:p}(t)) $, defined by
\begin{align} \label{eqn:sys_aug} 
\Sigma_{p}: \left\lbrace 
\begin{array}{rll}
x(t+1) &=& f(x(t),u(t),d_1(t))\\
d_{1}(t+1) &=& d_{2}(t)\\
&\cdots&\\
d_{p-1}(t+1) &=& d_{p}(t)\\
d_{p}(t+1) &=& d(t),
\end{array}\right.
\end{align}
with $\xi(t) \in \R^{n}\times D^{p} $, $ u(t)\in \R^m $ and $ d(t)\in D \subseteq \R^{l} $.  

Suppose $ \Sigma $ has safe set $ S_{xu} $. We define the \emph{$p$-augmented safe set} of $ \Sigma_{p} $ by $$ S_{xu,p} = \{ (x,d_{1:p}, u) \mid (x,u)\in S_{xu}, (d_{1}, \cdots, d_{p})\in D^{p}\}.$$ 
Note that {if $ d_{1:p} \in D^p$}, to check $(x, d_{1:p}, u) \in S_{xu,p}$ is equivalent to check $(x,u)\in S_{xu}$.
In what follows, we use $C_{max,p}(\Sigma, S_{xu})$ to denote the maximal controlled invariant set $C_{max}(\Sigma_{p}, S_{xu,p})$. When $\Sigma$ and $S_{xu}$ are clear from the context, we use $C_{max,p}$ for short. 

There are two baseline methods to compute controlled invariant sets of $\Sigma_{p}$ in $S_{xu,p}$:

\noindent \textbf{Method 1}: Apply the following iterative procedure: Compute $X_{0,p} = PROJ_{1:n}(S_{xu})\times D^{p}$. Then, compute $X_{k,p} = Pre_{\Sigma_{p}}(X_{k-1,p}, S_{xu,p})$ recursively until convergence, that is $X_{k+1,p} = X_{k,p}$. \END 

When the recursive procedure terminates, Method 1 returns the maximal controlled invariant set. However, Method 1 is not guaranteed to terminate in finite iterations and does not scale well for high-dimensional systems. Since the dimensionality $(n+pl)$ of $\Sigma_{p}$ is proportional to the preview time $p$, this method does not work well for systems with a long preview time.

\noindent \textbf{Method 2:} Find a conservative controlled invariant set $X_{0,p}$ of $\Sigma_{p}$ within $S_{xu,p}$. Pick a maximal iteration number $K \in \mathbb{N}\cup \{\infty\} $. Compute $X_{k,p} = Pre_{\Sigma_{p}}(X_{k-1,p}, S_{xu,p})$ recursively until $k\geq K$ or $X_{k,p} = X_{k-1,p}$. \END

A typical choice of $X_{0,p}$ is $C_{max}(\Sigma,S_{xu})\times D^{p}$. Since $ X_{0,p} $ is controlled invariant, the size of $X_{k,p}$ grows as $k$ increases, and $X_{k,p}$ is controlled invariant for any $k\geq 0$. In practice, Method 2 can be more scalable than Method 1. The drawback of Method 2 is that, as $K \rightarrow \infty	$,  $X_{K,p}$ does not necessarily converge to the maximal controlled invariant set, {as shown in Example \ref{exp:inside_out}}. In this sense, Method 2 is more conservative than Method 1. 

\begin{example} \label{exp:inside_out}   
	Consider the following $2$-dimensional system 
	\begin{align}
	\Sigma:  \begin{bmatrix}
	x_1(t+1)\\
	x_2(t+1)
	\end{bmatrix}= \begin{bmatrix}
	0 & 1\\
	0 & 0\end{bmatrix} + \begin{bmatrix}
	0\\1	
	\end{bmatrix}u(t) + 0\cdot d(t)
	\end{align}
	with $x(t)$, $ d(t)\in D \subseteq \R^{2}$ and $u(t)\in \R$. The safe set is  $S_{x}\times\R$, where $S_{x} = \{(a,a) \mid \vert a \vert \leq 1\}.$ Since the disturbance term is multiplied by $0$, for any preview horizon $p$, the maximal controlled invariant set of the $p$-augmented system is the $p$-augmented safe set $S_{x}\times D^{p} \times \R$. We apply Method 2 with the seed set $X_{0,p} = \{(0,0)\}\times D^{p}$. We can easily check that $X_{0,p}$ is controlled invariant. For arbitrarily large $K>0$ in Method 2, $X_{K,p} = X_{0,p}$ is strictly contained by the maximal controlled invariant set. \END
\end{example}

It is worth noting that if we use $C_{max}(\Sigma,S_{xu})$ as the terminal state constraints in a model predictive control formulation with the planning horizon $p$, {the feasible set of the {initial states and the disturbances is equal to the controlled invariant set obtained by taking $K=p$ in Method 2 with the seed set $ C_{max}(\Sigma,S_{xu})\times D^{p}$.} In other words, this model predictive control formulation implicitly embeds the results of Method 2.} 

In this work, we want to study the general properties of controlled invariant sets of $\Sigma_{p}$. For instance, is a longer preview always a better choice? How does the maximal controlled invariant set change as the preview time $p$ increases? Then, we study a special class of systems where the closed-form expression of the maximal controlled invariant set of the $p$-augmented systems can be derived analytically.

\section{Analytical Results}
In this section, we present analytical inner and outer approximations of controlled invariant sets for systems with different preview times. We also provide examples where the approximations are tight or not tight. Moreover, based on the approximations, we discuss {strategies to choose the preview time $p$}. An intuitive strategy is to select $p$ as large as possible, since a longer preview time provides more information than a shorter preview. However, since the dimension of $ \Sigma_p $ is proportional to $ p $, the existing methods suffer from the curse of dimensionality if the preview time is too long. Thus, we need a good strategy to select $p$, balancing between the computational cost and the performance.

First, the following theorem allows us to compare controlled invariant sets for systems with different preview times.
\begin{theorem} \label{thm:cartesian} 
	Suppose a set $C_{p_1} \subseteq \R^{n}\times D^{p_1}$ is a controlled invariant set of $\Sigma_{p_1}$ within $S_{xu,p_1}$ for some $p_1 \geq 0$. Then, for $p_2 > p_1$, $C_{p_1}\times D^{p_2-p_1}$ is a controlled invariant set of $\Sigma_{p_2}$ within $S_{xu,p_2}$.
\end{theorem}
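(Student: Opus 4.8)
The plan is to verify Definition~\ref{def:inv_set} directly for the set $C_{p_1}\times D^{p_2-p_1}$ with respect to $\Sigma_{p_2}$ and $S_{xu,p_2}$, reusing the very control input that the controlled invariance of $C_{p_1}$ already supplies. By iterating one preview step at a time it would suffice to treat $p_2 = p_1+1$, but the bookkeeping is no harder for general $p_2$, so I would argue directly. Throughout, the one thing to keep straight is how the shift registers $d_j(t+1)=d_{j+1}(t)$ in \eqref{eqn:sys_aug} relabel coordinates; in particular, the disturbance coordinate $d_{p_1+1}$ sitting just past the $C_{p_1}$-block plays exactly the role of the universally quantified ``incoming'' disturbance in the invariance condition for $\Sigma_{p_1}$.

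Concretely, I would fix an arbitrary point $\xi=(x,d_1,\dots,d_{p_2})\in C_{p_1}\times D^{p_2-p_1}$, so that $(x,d_1,\dots,d_{p_1})\in C_{p_1}$ and $d_{p_1+1},\dots,d_{p_2}\in D$. Since $C_{p_1}$ is controlled invariant for $\Sigma_{p_1}$ within $S_{xu,p_1}$, there is an input $u$ with $(x,d_{1:p_1},u)\in S_{xu,p_1}$ such that, for every $d'\in D$, the $\Sigma_{p_1}$-successor $(f(x,u,d_1),d_2,\dots,d_{p_1},d')$ lies in $C_{p_1}$. I claim the same $u$ witnesses controlled invariance at $\xi$ for $\Sigma_{p_2}$. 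Membership $(\xi,u)\in S_{xu,p_2}$ is immediate: $(x,u)\in S_{xu}$ follows from $(x,d_{1:p_1},u)\in S_{xu,p_1}$, and $d_{1:p_2}\in D^{p_2}$ holds by construction of $\xi$.

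It then remains to check that for every $d\in D$ the $\Sigma_{p_2}$-successor $(f(x,u,d_1),d_2,d_3,\dots,d_{p_2},d)$ lies in $C_{p_1}\times D^{p_2-p_1}$. I would split it into its first $n+p_1$ coordinates and its last $p_2-p_1$ coordinates. The first block is $(f(x,u,d_1),d_2,\dots,d_{p_1+1})$, which is precisely the $\Sigma_{p_1}$-successor above with the free disturbance instantiated as $d'=d_{p_1+1}$; since $d_{p_1+1}\in D$, the invariance of $C_{p_1}$ forces this block into $C_{p_1}$. The last block is $(d_{p_1+2},\dots,d_{p_2},d)$, whose entries all lie in $D$ (the $d_j$ because $\xi\in C_{p_1}\times D^{p_2-p_1}$, and $d$ because it is an admissible disturbance), hence it lies in $D^{p_2-p_1}$. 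This establishes the required membership for all $d\in D$ and completes the verification.

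I do not anticipate a genuine obstacle here, since the content is entirely in the coordinate accounting. The only point warranting care is the off-by-one alignment: the shift structure of \eqref{eqn:sys_aug} pushes the coordinate $d_{p_1+1}$ of $\xi$ into the last slot of the $C_{p_1}$-block of the successor, which is exactly why the already-established ``for all $d'\in D$'' guarantee for $\Sigma_{p_1}$ is strong enough and no new condition on $C_{p_1}$ is needed. The same reading also covers the edge case $p_1=0$, where $\Sigma_0=\Sigma$ and $d'=d_1$ recovers the base-system invariance.
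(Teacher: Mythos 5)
Your proposal is correct and follows essentially the same route as the paper's proof: both select the input $u$ from the admissible input set of $C_{p_1}$ at the truncated state $(x,d_{1:p_1})$ and observe that the coordinate $d_{p_1+1}$ of $\xi$ instantiates the universally quantified incoming disturbance for $\Sigma_{p_1}$, forcing the first block of the successor into $C_{p_1}$ while the trailing block trivially lands in $D^{p_2-p_1}$. Your write-up simply makes explicit the coordinate bookkeeping that the paper leaves implicit.
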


Suppose  $p_2 > p_1 \geq 0$.  Thanks to Theorem \ref{thm:cartesian}, improvement {in safety control by} increasing the preview time from $ p_1 $ to $ p_2 $ can be measured by the volume difference of $ C_{max,p_2}$ and $ C_{max,p_1}\times D^{p_2-p_1} $. Moreover, $C_{max,p_1}$ provides an inner bound for $C_{max,p_2}$, that is 
\begin{align} \label{eqn:lifting} 
C_{max,p_1} \times D^{p_{2}-p_1} \subseteq  C_{max,p_2}. 
\end{align}
As a result, for all states $(x, d_{1:p_2})\in S_{xu,p_2}$, 
\begin{align}
\mathcal{A}((x, d_{1:p_1}), C_{max,p_1}) \subseteq \mathcal{A}((x, d_{1:p_2}),C_{max,p_2}).
\end{align}
That is, the maximal admissible input set at each state grows as the preview time increases.
An important question is then if there exists a critical $p_0$ such that the maximal admissible input set stops growing for $p > p_0$, that is for all $p>p_0$, for all states $(x, d_{1:p})\in S_{xu,p}$, 
\begin{align}
\mathcal{A}((x, d_{1:p_0}), C_{max,p_0}) =\mathcal{A}((x,d_{1:p}),C_{max,p}).
\end{align}
If such a $p_0$ indeed exists, we know the longest preview time to be considered is $p_0$, since preview longer than $p_0$ does not provide more admissible inputs.  In the next section, we show that this $p_0$ does exist for a specific class of systems. However, for general systems,  $p_0$ may not exist, shown by the following example.

\begin{example} \label{exp:1d}     
	Consider a $1$-dimensional system $$\Sigma: x(t+1) = ax(t) + u(t) + d(t),$$ with $x(t)$, $u(t)\in \R$ and $d(t)\in [- \gamma, \gamma]$. The safe set $S_{xu} = [-r, r]\times [- \beta, \beta]$. 
	
	Suppose that the parameters $ a$, $ \gamma$, $ \beta$ and $p$ satisfy $a > 1$, $r \geq (\beta+ \gamma)/(a-1)$ and $ a^{p-1}\beta \geq  \gamma$. Then, the maximal controlled invariant set $C_{max,p}$ of the $p$-augmented system within the augmented safe set $[-r,r]\times [- \gamma, \gamma]^{p}\times [- \beta, \beta]$ is the set of points $(x,d_1, \cdots, d_{p})$ satisfying\footnote{The proof can be found in Appendix.}
	
	\begin{itemize}
		\item[] (i) $d_{i} \in [- \gamma, \gamma]$ for $i$ from $1$ to $p$,
		\item[] (ii) $\vert x + \sum^{p}_{i=1} d_{i}/ a^{i} \vert \leq \frac{\beta - \gamma/ a^{p}}{(a-1)}$. 
	\end{itemize}
	Based on the closed-form expression of $C_{max,p}$, we can easily verify that for all $p\geq 0$, the maximal controlled invariant set $C_{max,p+1}$ strictly contains $C_{max,p}\times [- \gamma, \gamma]$ and thus the maximal admissible input set $ \mathcal{A} ((x, d_{1:p+1}), C_{max,p+1} ) $ strictly contains $  \mathcal{A}((x,d_{1:p}), C_{max,p})$ for some $(x, d_{1:p+1})\in C_{max,p+1}$. \END
\end{example}

{Example \ref{exp:1d} reveals that the maximal controlled invariant set may not converge at finite $p_0$ in the sense of $C_{max,p} = C_{max,p_0}\times D^{p-p_0}$ for $p \geq p_0$. } Then, to understand the asymptotic properties of $C_{max,p}$ as $p$ goes to infinity, we consider the {disturbance-collaborative system of $\Sigma$}:
\begin{align} \label{sys:inf} 
\mathcal{D}(\Sigma): x(t+1) = f(x(t),u(t),u_{d}(t)) 
\end{align}
with $x(t)\in \R^{n}$, $u(t)\in R^{m}$ and $u_{d}(t)\in R^{l}$. $A, B$ matrices are the same as in $\Sigma$. $u_{d} $ and $u$ are both input signals of $\mathcal{D}(\Sigma)$. The safe set of $\mathcal{D}(\Sigma)$  on $(x,u,u_{d})$ is $S_{xu, co} = S_{xu}\times D $. 

We denote the maximal controlled invariant set $C_{max}(\mathcal{D}(\Sigma), S_{xu,co})$ by $C_{max,co}( \Sigma, S_{xu})$, or $ C_{max,co} $ when $\Sigma$ and $S_{xu}$ are clear from the context. Intuitively, $ C_{max,co} $ contains all the possible initial states $ x $ from which the future state-input pairs of $ \Sigma $ can stay in $ S_{xu} $ indefinitely, when we have infinite preview time.

\begin{theorem} \label{thm:inf_p} 
	{The maximal controlled invariant set $ C_{max,p} $ of $ \Sigma_p $ within $ S_{xu,p} $ is a subset of the Cartesian product of the maximal controlled invariant set $ C_{max,co} $ and the set $ D^p $}, that is $C_{max,p} \subseteq C_{max,co} \times D^{p}$.
\end{theorem}
By Theorem \ref{thm:inf_p}, we know that $PROJ_{1:n}(C_{max,p}) \subseteq C_{max,co}$. Actually $C_{max,co}$ is a tight bound on $PROJ_{1:n}(C_{max,p})$ in general, shown by the following example where the Hausdorff distance between $PROJ_{1:n}(C_{max,p})$ and $C_{max,co}$ converges to $0$.
\begin{example}
	We consider the same dynamics and safe set in Example \ref{exp:1d}. The projection of the maximal controlled invariant set onto the first coordinate is 
	\begin{align}
	PROJ_{1}(C_{max,p}) = [ - \frac{\beta+\gamma- 2 \gamma/ a^{p} }{a-1}, \frac{\beta+\gamma- 2 \gamma/ a^{p} }{a-1}].
	\end{align}
	The corresponding disturbance-collaborative system is $$x(t+1) = ax(t) + u$$ with the safe set $[-r,r]\times [- \beta- \gamma, \beta + \gamma] $.
	It is easy to check that the maximal controlled invariant set $C_{max,co} $ of the disturbance-collaborative system is $ [- ( \beta+ \gamma) / (a-1),  ( \beta + \gamma) / (a-1)].$  Thus, $PROJ_{1} (C_{max,p})$ is strictly contained by $C_{max,co}$ for all\footnote{Recall that in Example \ref{exp:1d} we assume that $a^{p-1} \beta \geq  \gamma$, which implies $p \geq 1+ (\log(\gamma) - \log ( \beta))/\log(a) $.} $p \geq 1+ (\log(\gamma) - \log ( \beta))/\log(a) $ and as $p$ goes to infinity, $PROJ_{1}(C_{max,p})$ converges to the interior of $C_{max,co}$, that is $$ \lim_{p \to \infty} PROJ_{1}(C_{max,p}) =Int(C_{max,co}) = ( - \frac{ \beta+ \gamma}{a-1},  \frac{ \beta + \gamma}{a-1} ). $$ \END 
\end{example}
However, the Hausdorff distance between the projection $PROJ_{1:n}(C_{max,p})$ and $C_{max,co}$ does not always converge to $0$ as $p$ goes to infinity, shown by the following example.
\begin{example}
	Consider system $x(t+1) = u(t) + d(t)$, with $x(t)$, $u(t)\in \R$ and $d(t)\in [-5,5]$. Suppose the safe set $S_{xu}=[-1,1]\times [-1,1]$.  Obviously, $C_{max,co} = [-1,1]$ but $C_{max,p} = \emptyset$ for all $p \geq 0$. \END
\end{example}
Combining Theorems \ref{thm:cartesian} and \ref{thm:inf_p} , given any $p$, the maximal controlled invariant set $C_{max,p}$ of $\Sigma_{p}$ within $S_{xu,p}$ is bounded by 
\begin{align}
C_{max,p'}\times D^{p-p'} \subseteq C_{max,p} \subseteq C_{max,co}\times D^{p}, \label{eqn:upper_lower_bounds} 
\end{align}
where $C_{max,p'}$ is the maximal controlled invariant set of $\Sigma_{p'}$ within $S_{xu,p'}$, for some $p' < p$.   The  cost of computing $C_{max,p'}$ and $C_{max,co}$ is independent of the preview time $p$, but the cost to compute $C_{max,p'}$ rises as $p'$ increases. An inner bound tighter than the left hand side in \eqref{eqn:upper_lower_bounds}  can be obtained by growing $C_{max,p'}\times D^{p-p'}$ via Method 2 with $X_{0,p}= C_{max,p'}\times D^{p-p'}$, which requires more computational cost.

In practice, according to \eqref{eqn:upper_lower_bounds}, if we already compute $C_{max,p'}$ for some $p'$ and wonder if it is worth taking more cost to compute $C_{max,p}$ for $p$ larger than $p'$, a useful strategy is to compare the volumes of $C_{max,p'}\times D^{p-p'}$ and $C_{max,co}\times D^{p}$. The volume difference of the two sets indicates what we can gain at most by further increasing preview time.

Another significant implication of \eqref{eqn:upper_lower_bounds} is that for any initial state not in $C_{max,co}$, the future state-input trajectory of the system $ \Sigma $ cannot stay within $ S_{xu} $ indefinitely no matter how long the preview time $p$ is. In other words, $C_{max,co}$ shows the limits of safety control with preview in terms of the allowable initial states. 

\section{Systems in Brunovsky canonical form with hyperbox safe sets} \label{sec:brunov} 
In this section, we study systems in Brunovsky canonical form with a single input\footnote{{The results in this section apply to multiple-input case, since in Brunovsky canonical form, a system with multiple inputs can be decoupled into several systems with single input\cite{antsaklis2006linear}.}}. Due to the simple structure of the systems in Brunovsky canonical form, we can derive a closed-form expression of the maximal controlled invariant set within hyperbox safe sets. Next, based on the closed-form expression, we show convergence properties of the maximal controlled invariant set as the preview time increases. In terms of generality, any controllable system can be converted to a system in Brunovsky canonical form via an invertible transformation (see \cite{antsaklis2006linear}), and thus our results on systems in Brunovsky canonical form is also useful for controllable systems.

The dynamics of a system $\Sigma_{B}$ in Brunovsky canonical form is
\begin{align} \label{eqn:sys_B} 
\Sigma_{B}: x(t+1) = \overline{A} x(t) + \overline{B} u(t) + d(t),
\end{align}
where  $x(t)\in \R^{n}$, $u(t)\in \R$, $d(t)\in D \subseteq \R^n$, and 
\begin{align} \label{eqn:A_b} 
\overline{A} =  \begin{bmatrix}
\mathbf{0}_{(n-1)\times 1} & \mathbf{I}_{n-1}\\
0 & \mathbf{0}_{1\times (n-1)}
\end{bmatrix},
\overline{B} =  \begin{bmatrix}
\mathbf{0}_{(n-1)\times 1} \\ 1	
\end{bmatrix}. 
\end{align}
The $ \mathbf{I}_{k}$ and $ \mathbf{0}_{j\times k}$ in \eqref{eqn:A_b} represent the identity matrix in $\R^{k\times k}$ and the matrix with all zero entries in $\R^{j\times k}$.

Suppose that $D$ is a polytope in $\R^{n}$, and $B_{d} = \Pi_{k=1}^{n}[c_{k,1}, c_{k,2}]$ is the smallest hyperbox containing $D$.  We consider a  hyperbox safe set $\mathbf{B}\times \R$, where the state $x$ is constrained within hyperbox $ \mathbf{B}= \Pi_{k=1}^{n}[b_{k,1}, b_{k,2}]$ and the input $u$ is unconstrained.  Denote the $p$-augmented system corresponding to $\Sigma_{B}$ by $\Sigma_{B,p}$. The $p$-augmented safe set is $\mathbf{B}\times D^{p}\times \R$. 

We first derive a necessary condition for the existence of nonempty controlled invariant sets of $\Sigma_{ \mathbf{B},p}$ within $ \mathbf{B}\times D^{p}\times \R$. The idea is based on the following observation: Given an input $u(t)$ at time $t\geq 0$, due to the special structure of $ \overline{A}$ and $\overline{B}$, the $ (n-k+1) $ th entry  $x_{n-k+1}(t+k)$ of the state at time $ t+k $ for $k$ with $1\leq k\leq n$ can be exactly expressed as
\begin{align}
\begin{split} \label{eqn:u_pg} 
x_{n-k+1}(t+k) &= u(t) +  \sum^{k-1}_{i=0}  d_{1,n-i}(t+i), 
\end{split}
\end{align}
where $d_{1,n-i}(t+i)$ is the $n-i$ th entry of  $d_1(t+i)\in \R^n$ for $i$ from $0$ to $n-1$. 

Suppose there exists a nonempty controlled invariant set in $ \mathbf{B}\times D^{p}\times \R$. Then, there exists at least one safe input $u(t)\in \R$ such that for all $k$ from $1$ to $n$, the right hand side of \eqref{eqn:u_pg} satisfies the constraints on $x_{n-k+1}(t+k)$ from $ \mathbf{B}$, robust to all possible future disturbances, that is, for $k$ from $1$ to $n$, 
\begin{align}
u(t) + \sum^{k-1}_{i=0} d_{1,n-i}(t+i)\in [b_{n-k+1,1}, b_{n-k+1,2}], \label{eqn:u_contain} 
\end{align}
for all possible values of $\sum^{k-1}_{i=0}d_{1,n-i}(t+i)$; otherwise, for all $u(t)\in \R$, we can find future disturbances such that the state $x_{n-k+1}(t+k)\not\in [b_{n-k+1,1},b_{n-k+1,2}]$.

Note that if $ i < p$,  $d_{1,n-i}(t+i)$ is a scalar known from preview at time $t$; otherwise $d_{1,n-i}(t+i)$ takes arbitrary values in $[c_{n-i,1}, c_{n-i,2}]$. Based on this observation, the condition of the existence of a safe input $u(t)$ satisfying  \eqref{eqn:u_contain} is given in Theorem \ref{thm:nec_p}, which is necessary for the existence of a nonempty controlled  invariant set. 

\begin{theorem} \label{thm:nec_p} 
	There exists a nonempty controlled invariant set of $\Sigma_{B,p}$ within $ \mathbf{B}\times D^{p}\times \R$ only if $\forall v\in V_{d, \overline{p}}$ we have
	\begin{align} \label{eqn:nec_cond_p} 
	\bigcap_{k=1}^{n} \left(\left[\widehat{b}_{k,1}, \widehat{b}_{k,2} \right]  - \sum^{\min(n-k+1, \overline{p})}_{i=1} v_{ \overline{p}-i+1}  \right)  \not= \emptyset
	\end{align}
	where $V_{d,\overline{p}}$ is the  set of vertices of the hyperbox $\mathbf{B}_{d,\overline{p}} = \Pi_{k=n- \overline{p}+1}^{n}[c_{k,1},c_{k,2}]$, and $ \overline{p} = \min (p, n)$, and $\widehat{b}_{k,1} = b_{k,1} -\sum^{n- \overline{p}}_{i=k}c_{i,1}$ and $ \widehat{b}_{k,2} = b_{k,2} - \sum^{n- \overline{p}}_{i=k}c_{i,2}$ for $k $ with $ n- \overline{p} \leq k \leq n$. 
\end{theorem}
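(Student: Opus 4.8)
The plan is to prove the contrapositive: I will show that if the intersection in \eqref{eqn:nec_cond_p} is empty for some vertex $v \in V_{d,\overline p}$, then no nonempty controlled invariant set of $\Sigma_{B,p}$ exists in $\mathbf B \times D^p \times \R$. The whole argument rests on the structural identity \eqref{eqn:u_pg}: once $u(t)$ is fixed, the entry $x_{n-k+1}(t+k)$ is completely determined by $u(t)$ and the disturbances $d_{1,n-i}(t+i)$, with no dependence on any later input. Hence maintaining $x(t+k)\in\mathbf B$ for $k=1,\ldots,n$ is a constraint purely on $u(t)$ and the disturbance stream, and a nonempty controlled invariant set forces such a $u(t)$ to exist at every reachable state.

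First I would separate the disturbance terms appearing in \eqref{eqn:u_contain} into those known from preview and those not. As observed before the theorem, $d_{1,n-i}(t+i)$ is available at time $t$ exactly when $i<p$; tracing the index $n-i$, the previewed entries are precisely the components with index in $\{n-\overline p+1,\ldots,n\}$, whose range is the hyperbox $\mathbf B_{d,\overline p}$, while the non-previewed entries are the components with index in $\{1,\ldots,n-\overline p\}$. For the latter I worst-case over $[c_{i,1},c_{i,2}]$ using the interval subtraction of the Notation section; collecting these worst cases into the bound on $u(t)$ for the $k$-th constraint yields exactly the shifted interval $[\widehat b_{k,1},\widehat b_{k,2}]$, and the known preview entries, once a vertex $v$ is fixed, contribute the scalar shift $\sum_{i=1}^{\min(n-k+1,\overline p)} v_{\overline p-i+1}$. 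Intersecting over $k$ from $1$ to $n$ gives precisely the set of inputs $u(t)$ that keep the next $n$ states inside $\mathbf B$ for that previewed realization, so nonemptiness of this set is \eqref{eqn:nec_cond_p}.

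The core step is then an adversary/reachability argument showing that every vertex $v\in V_{d,\overline p}$ can actually be realized as the previewed values at some reachable state. The key observation is that the relevant previewed entries come from distinct time steps, namely the $(n-i)$-th entry of $d_1(t+i)=d(t+i-p)$ for $i=0,\ldots,\overline p-1$, so the disturbance can independently drive each of these entries to either endpoint $c_{n-i,1}$ or $c_{n-i,2}$ (these are the true per-coordinate extrema of $D$ because $B_d$ is its smallest enclosing hyperbox). Assuming a nonempty controlled invariant set $C$ exists, from any point of $C$ the disturbance plays these choices over $p$ steps while the safe controller keeps the trajectory inside $C\subseteq\mathbf B\times D^p$; at the resulting state the preview equals the chosen vertex $v$, so a safe input must exist, i.e. \eqref{eqn:nec_cond_p} holds. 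Contrapositively, if \eqref{eqn:nec_cond_p} fails for some vertex, the disturbance first sets up that vertex and then, exploiting the empty intersection, forces some $x_{n-k+1}(t+k)\notin\mathbf B$ for every choice of $u(t)$, pushing the state out of $C$ and contradicting invariance.

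I would close by noting that restricting attention to the vertices $V_{d,\overline p}$ loses nothing: writing the emptiness test for the intersection of intervals as $\max_k(\widehat b_{k,1}-S_k(v))-\min_k(\widehat b_{k,2}-S_k(v))\le 0$, where $S_k(v)=\sum_{i=1}^{\min(n-k+1,\overline p)} v_{\overline p-i+1}$ is linear in $v$, the left side is a convex function of $v$ and therefore attains its maximum over the box $\mathbf B_{d,\overline p}$ at a vertex; hence validity at all vertices is equivalent to validity on the whole box. The main obstacle I anticipate is not conceptual but the index bookkeeping: carefully matching the component index $n-i$, the time offset $i$, the preview cutoff at $\overline p=\min(p,n)$, and the ordering of the vertex coordinates $v_{\overline p-i+1}$, and checking the empty-sum conventions so that the constraints with $k>n-\overline p$ (which involve no worst-casing) reduce correctly to $\widehat b_{k,\cdot}=b_{k,\cdot}$.
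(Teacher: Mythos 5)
Your proposal is correct and follows essentially the same route as the paper's proof: split the disturbance terms in \eqref{eqn:u_contain} into previewed and non-previewed parts, worst-case the latter into the shifted intervals $[\widehat b_{k,1},\widehat b_{k,2}]$, observe that the adversary can realize any vertex of $\mathbf B_{d,\overline p}$ as the previewed values at a reachable state, and conclude that the intersection over $k$ must be nonempty there. Your closing convexity argument for reducing the whole box to its vertices is a welcome justification of a step the paper only asserts, but it does not change the approach.
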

In practice, if we want to compute controlled invariant sets of $\Sigma_{B,p}$, unnecessary computations can be avoided by checking the condition in \eqref{eqn:nec_cond_p} first. As the number of constraints in \eqref{eqn:nec_cond_p} is proportional to the cardinality of $ V_{d, \overline{p}} $, we derive an equivalent condition to \eqref{eqn:nec_cond_p} that contains only  $n^{2}$ inequalities: for all $j $ and $k$ from $1$ to $n$, 
\begin{align}  \label{eqn:nonempty} 
\begin{split}
&\forall j=k,\; b_{j,1}-b_{k,2} \leq \sum^{n-p}_{i=k}(c_{i,1}-c_{i,2}), \\
&\forall j<k,\; b_{j,1}-b_{k,2} \leq \sum^{n-p}_{i=j}c_{i,1}- \sum^{n-p}_{i=k} c_{i,2} + \sum^{\max(k-1,n-p)}_{i=\max(j,n-p+1)} c_{i,1}, \\
&\forall j>k,\; b_{j,1}- b_{k,2} \leq \sum^{n-p}_{i=j}c_{i,1}- \sum^{n-p}_{i=k} c_{i,2}- \sum^{\max(j-1,n-p)}_{i=\max(k,n-p+1)} c_{i,2}.
\end{split}
\end{align}

Next, suppose that there exists a nonempty controlled invariant set, namely that \eqref{eqn:nec_cond_p} is satisfied. We  derive conditions under which states $\xi = (x,d_1,d_2,...,d_{p})\in \R^{(p+1)n}$ are contained  by the maximal controlled invariant set.

We use $x_{i}$, $d_{k,i}$ to denote $i $ th entry of $x$, $d_{k}$. According to the dynamics in \eqref{eqn:sys_B}, the first $(n-t)$ entries of the vector $x(t)$ for all $t = 0, 1, \cdots  , n-1$ are independent from the control inputs and completely determined by the initial state  $x(0)$ and disturbances $d(0) $, $d(1)$ ..., $d(n-2)$. 

Thus, one necessary condition on $\xi(0) = (x(0), d_{1:p}(0)) \in C_{max,p}$  is that for all possible future disturbances in $ D $ that are not previewed yet at the initial time, for all $t$ from $ 0 $ to $ n-1 $ and all $k$ from $ 1 $ to $ n-t $,  the state $x(t)$ satisfies 
\begin{align} \label{eqn:state_nec_cond} 
\begin{split} 
x_{k}(t) \in [b_{k,1}, b_{k,2}].
\end{split} 
\end{align}
By expanding $x_{k}(t)$ using $x(0)$ and $ d_{1:p}(0) $, we obtain the conditions stated in the following theorem.
\begin{theorem} \label{thm:state_nec_cond} 
	A state	$(x,d_{1:p}) $ is contained in the maximal controlled invariant set $ C_{max,p}$ only if
	\begin{align}
	\label{eqn:C_p_1} 
	&x\in \mathbf{B}, d_{1:p}\in D^{p},
	\end{align}
	and  for all $ k $, $ 2 \leq k \leq n$ and for all $j $, $ 1 \leq j < k:$
	\begin{align}
	x_{k} + \sum^{\min(k-j,p)}_{i=1} d_{i,k-i} \in		[b_{j,1}, b_{j,2}] - \sum^{k-j}_{i=p+1} [c_{k-i,1},c_{k-i,2}]. \label{eqn:C_p_2} 
	\end{align}
	where $d_{i,k-i}$ is the $k-i$ th entry of vector $d_{i}$. 
\end{theorem}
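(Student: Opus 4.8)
The plan is to exploit the shift (chain-of-integrators) structure of $\overline{A},\overline{B}$ in \eqref{eqn:A_b} to write each state coordinate that the input cannot influence as an explicit affine function of the initial state $x(0)$ and the past disturbances, and then to read off \eqref{eqn:C_p_2} as the requirement that this coordinate stay inside its safe slab of $\mathbf{B}$ robustly against every unpreviewed disturbance. Throughout I argue necessity: I assume $\xi(0)=(x(0),d_{1:p}(0))\in C_{max,p}$, so there is a safe controller keeping $\xi(t)$ inside $C_{max,p}\subseteq\mathbf{B}\times D^{p}\times\R$ for every disturbance sequence, and in particular $x(t)\in\mathbf{B}$, i.e. $x_j(t)\in[b_{j,1},b_{j,2}]$, for all $t$ and $j$. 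Condition \eqref{eqn:C_p_1} is then immediate from $\xi(0)$ lying in the augmented safe set.

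First I would record the coordinatewise dynamics implied by \eqref{eqn:sys_B}--\eqref{eqn:A_b}, namely $x_i(t+1)=x_{i+1}(t)+d_{1,i}(t)$ for $1\le i\le n-1$ and $x_n(t+1)=u(t)+d_{1,n}(t)$, which shows that the update of every coordinate except the last is free of the control input. A straightforward induction on $t$ then gives the closed form
\[
x_j(t)=x_{j+t}(0)+\sum_{s=0}^{t-1}d_{1,\,j+s}(t-1-s),\qquad j+t\le n,
\]
which matches the observation preceding the theorem that the first $n-t$ entries of $x(t)$ depend only on $x(0)$ and the disturbances up to time $t-1$, and never on $u$. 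This is the analogue, for the uncontrolled coordinates, of the controlled-coordinate identity \eqref{eqn:u_pg}.

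Next I would specialize to $t=k-j$ (so the initial coordinate appearing is $x_{j+t}(0)=x_k$) for each pair $1\le j<k\le n$, and reindex the disturbance sum by $i=t-s$, so the coordinate index becomes $k-i$ and the time index becomes $i-1$; this turns the identity into $x_j(k-j)=x_k+\sum_{i=1}^{k-j}d_{1,\,k-i}(i-1)$. The decisive step is the classification of these terms via the preview pipeline in \eqref{eqn:sys_aug}: unrolling $d_k(t+1)=d_{k+1}(t)$ gives $d_1(i-1)=d_i(0)$ for $i\le p$, so the terms with $i\le p$ are the \emph{previewed}, known quantities $d_{i,k-i}$, while the terms with $i\ge p+1$ are \emph{unpreviewed} future disturbance entries. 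Splitting the sum at $\min(k-j,p)$ yields the known part $\sum_{i=1}^{\min(k-j,p)}d_{i,k-i}$ exactly as on the left of \eqref{eqn:C_p_2}.

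Finally I would convert the safety requirement \eqref{eqn:state_nec_cond}, specialized to $x_j(k-j)\in[b_{j,1},b_{j,2}]$, into \eqref{eqn:C_p_2}. Because each unpreviewed term $d_{1,k-i}(i-1)$ with $i\ge p+1$ is a single coordinate of the disturbance at a \emph{distinct} time step, and because $B_d=\Pi_k[c_{k,1},c_{k,2}]$ is the smallest hyperbox containing $D$ (so the projection of $D$ onto coordinate $k-i$ is exactly $[c_{k-i,1},c_{k-i,2}]$), these terms can be chosen independently and adversarially over $\Pi_{i=p+1}^{k-j}[c_{k-i,1},c_{k-i,2}]$. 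Requiring containment for \emph{all} such choices is, by the interval arithmetic fixed in the Notation, equivalent to $x_k+\sum_{i=1}^{\min(k-j,p)}d_{i,k-i}\in[b_{j,1},b_{j,2}]-\sum_{i=p+1}^{k-j}[c_{k-i,1},c_{k-i,2}]$, which is \eqref{eqn:C_p_2} (with the convention that the interval sum is empty, and the constraint reduces to membership in $[b_{j,1},b_{j,2}]$, when $p\ge k-j$). I expect the main obstacle to be purely the index bookkeeping---keeping the two shifts, the state shift in $\overline{A}$ and the preview shift in \eqref{eqn:sys_aug}, aligned so that the previewed/unpreviewed split lands precisely at $i=p$---together with justifying that the adversary can realize each coordinate extreme independently across the distinct time steps.
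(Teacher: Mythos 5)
Your proposal is correct and follows essentially the same route as the paper's proof: derive the closed form of the input-independent coordinates, split the disturbance sum at the preview horizon into a known part and an adversarial part, and convert robustness against the unpreviewed part into the interval subtraction in \eqref{eqn:C_p_2}. Your explicit justification that the unpreviewed terms range independently over exactly $[c_{k-i,1},c_{k-i,2}]$ (via the coordinate projections of $D$ and the distinct time steps) is a point the paper's proof states without elaboration, but the argument is otherwise the same.
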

{To clarify the notation, in the case of $k-j < p+1$, the right hand set of \eqref{eqn:C_p_2} becomes $[b_{j,1}, b_{j,2}] - \emptyset = [b_{j,1}, b_{j,2}]$.}
We denote the set of states $ (x,d_{1:p})$ satisfying constraints in  \eqref{eqn:C_p_1} and \eqref{eqn:C_p_2} by $C_{p}$. The following theorem states that the maximal controlled invariant set of $\Sigma_{B,p}$ within $ \mathbf{B}\times D^{p}\times \R$ is exactly equal to $C_{p}$.
\begin{theorem} \label{thm:max_inv_set} 
	Suppose that \eqref{eqn:nec_cond_p}  is satisfied. Define
	\begin{align}
	C_{p} = \{ \xi = (x,d_{1:p}) \mid \xi \text{ satisfies \eqref{eqn:C_p_1}, \eqref{eqn:C_p_2}}  \}. \label{eqn:C_p} 
	\end{align} 
	Then, $C_{p}$ is the maximal controlled invariant set of $\Sigma_{B,p}$ within the safe set $ \mathbf{B}\times D^{p}\times \R$.
\end{theorem}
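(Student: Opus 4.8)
The plan is to prove the two inclusions $C_{max,p}\subseteq C_p$ and $C_p\subseteq C_{max,p}$ separately. The first is exactly Theorem \ref{thm:state_nec_cond}, which asserts that every state of $C_{max,p}$ satisfies \eqref{eqn:C_p_1} and \eqref{eqn:C_p_2}, i.e. $C_{max,p}\subseteq C_p$. For the reverse inclusion I would invoke maximality: since $C_{max,p}$ contains every controlled invariant set of $\Sigma_{B,p}$ within $\mathbf{B}\times D^{p}\times\R$, it suffices to show that $C_p$ is itself controlled invariant. Thus the whole proof reduces to one verification: for every $\xi=(x,d_{1:p})\in C_p$ there is an input $u\in\R$ (the input is unconstrained here) such that the successor $\xi^{+}$ under $\Sigma_{B,p}$ lies in $C_p$ for every newly arriving disturbance $d\in D$; membership $(\xi,u)\in S_{xu,p}$ is automatic because $\xi\in C_p\subseteq\mathbf{B}\times D^{p}$ and $u$ is free.

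First I would write out the successor explicitly. From the dynamics \eqref{eqn:sys_B}, the matrices \eqref{eqn:A_b}, and the augmentation \eqref{eqn:sys_aug}, the successor coordinates are $x_i^{+}=x_{i+1}+d_{1,i}$ for $i<n$, $x_n^{+}=u+d_{1,n}$, $d_i^{+}=d_{i+1}$ for $i<p$, and $d_p^{+}=d$. The key structural observation is that $u$ enters only through $x_n^{+}$. I would therefore split the defining constraints of $C_p$ for $\xi^{+}$ into those indexed by $k<n$, which do not see $u$, and those indexed by $k=n$, which do, and also note that $d_{1:p}^{+}=(d_2,\dots,d_p,d)\in D^{p}$ holds trivially.

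For the constraints with $k<n$ (together with $x_i^{+}\in[b_{i,1},b_{i,2}]$ for $i<n$), the plan is to substitute the successor expressions and re-index the previewed-disturbance sum by $i\mapsto i+1$, absorbing $d_{1,k}$ into the shifted sum. After robustifying against the single adversarial term $d_p^{+}=d$ that appears exactly when $k-j\ge p$ (so that the index $i=p$ is active), each constraint on $\xi^{+}$ at $(k,j)$ should collapse precisely onto constraint \eqref{eqn:C_p_2} for $\xi$ at the shifted index $(k+1,j)$, while $x_i^{+}\in\mathbf{B}$ matches $\xi$'s constraint at $(i+1,i)$. Since $\xi\in C_p$, all these shifted constraints already hold, so this whole block is automatic. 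I expect this to be routine but index-heavy interval bookkeeping, in which the interval-sum conventions (empty sums, Minkowski subtraction) must be tracked carefully, including the degenerate $p=0$ case.

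The remaining and genuinely hard step concerns the constraints with $k=n$, namely $x_n^{+}\in[b_{n,1},b_{n,2}]$ and \eqref{eqn:C_p_2} at $(n,j)$ for $1\le j<n$. Here I would move all known previewed terms to the right-hand side and robustify against the newly arriving disturbance, turning each constraint into the requirement that $x_n^{+}$ (equivalently $u=x_n^{+}-d_{1,n}$) lie in a fixed interval; a valid $u$ exists iff the intersection of these $n$ intervals is nonempty, which is exactly the one-step feasibility structure \eqref{eqn:u_contain} underlying Theorem \ref{thm:nec_p}. The main obstacle is to certify this intersection for the \emph{specific} previewed values carried by $\xi$, whereas the hypothesis \eqref{eqn:nec_cond_p} is only checked at the vertices $v\in V_{d,\overline{p}}$. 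I would resolve this by observing that the pairwise nonemptiness inequalities are affine in the previewed disturbance components, hence minimized over the disturbance box at a vertex, so that \eqref{eqn:nec_cond_p} (equivalently the $n^{2}$ inequalities \eqref{eqn:nonempty}) already guarantees nonemptiness at every interior value. Choosing any $u$ in this intersection then yields $\xi^{+}\in C_p$, establishing that $C_p$ is controlled invariant and, with Theorem \ref{thm:state_nec_cond}, that $C_p=C_{max,p}$.
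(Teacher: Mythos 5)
Your proposal is correct and follows essentially the same route as the paper: $C_{max,p}\subseteq C_p$ is read off from Theorem \ref{thm:state_nec_cond}, and the reverse inclusion is reduced to showing $C_p$ is controlled invariant by exhibiting a feasible input, with the vertex condition \eqref{eqn:nec_cond_p} extended to the specific previewed values via affinity over the disturbance box. The only cosmetic difference is in that last step: the paper is constructive, writing the previewed vector as a convex combination $\sum_i\alpha_i e_i$ of vertices and taking $u(0)=\sum_i\alpha_i u_i$ with $u_i$ feasible for $e_i$, whereas you certify nonemptiness of the interval intersection through the pairwise (Helly-in-$\R$) inequalities being affine in $v$ and hence extremized at vertices --- two phrasings of the same convexity argument.
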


\begin{corollary} \label{cor:nonempty} 
	The condition in \eqref{eqn:nec_cond_p} 
	is necessary and sufficient for the existence of nonempty controlled invariant sets of $\Sigma_{B,p}$ within $ \mathbf{B}\times D^{p}\times \R$.
\end{corollary}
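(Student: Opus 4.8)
The plan is to handle the two implications separately and to note that all the real work is in the sufficiency direction. For necessity (the ``only if'' part) I would invoke Theorem~\ref{thm:nec_p} verbatim: if a nonempty controlled invariant set of $\Sigma_{B,p}$ within $\mathbf{B}\times D^{p}\times\R$ exists, then \eqref{eqn:nec_cond_p} holds, and there is nothing more to do.

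For sufficiency I would first reduce the statement to a nonemptiness claim. Assuming \eqref{eqn:nec_cond_p}, Theorem~\ref{thm:max_inv_set} identifies the set $C_{p}$ of \eqref{eqn:C_p_1}--\eqref{eqn:C_p_2} as the maximal controlled invariant set; hence it suffices to exhibit a single point of $C_{p}$, for then $C_{p}$ itself is the desired nonempty controlled invariant set.

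To build such a point I would fix one disturbance value $d^{\ast}\in D$ and set $d_{1}=\cdots=d_{p}=d^{\ast}$, which satisfies $d_{1:p}\in D^{p}$ automatically. The crucial observation is that, for each vertex $v$, the intersection in \eqref{eqn:nec_cond_p} is nonempty if and only if a finite family of pairwise inequalities between its interval endpoints holds, and these inequalities are affine in $v$; since a linear function attains its extrema over $\mathbf{B}_{d,\overline p}$ at the vertices $V_{d,\overline p}$, requiring \eqref{eqn:nec_cond_p} at every vertex is equivalent to requiring the intersection to be nonempty for every $v$ in the whole box, in particular for the previewed entries supplied by $d^{\ast}\in D\subseteq B_{d}$. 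This produces a scalar $w$ in the relevant intersection. I would then set the state coordinates by the telescoping formula $x_{k}=w+\sum_{m=k}^{n}d^{\ast}_{m}$ (with the non-previewed interval endpoints inserted in place of $d^{\ast}_{m}$ for indices beyond the preview when $p<n$). With this choice every cross-constraint in \eqref{eqn:C_p_2} collapses: the sum $\sum_{m=j}^{k-1}d^{\ast}_{m}$ added to $x_{k}$ telescopes to $x_{j}=w+\sum_{m=j}^{n}d^{\ast}_{m}$, which lies in $[b_{j,1},b_{j,2}]$ precisely because $w$ was taken from the $j$-th set of the intersection; the case $j=k$ recovers $x\in\mathbf{B}$. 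Verifying \eqref{eqn:C_p_1}--\eqref{eqn:C_p_2} for this point then closes the argument.

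The main obstacle I anticipate is the general preview regime $p<n$, where the shift in \eqref{eqn:C_p_2} splits into a previewed part fixed by $d^{\ast}$ and a non-previewed part carried as a Minkowski interval subtraction absorbed into $\widehat{b}_{k,\cdot}$. The delicate bookkeeping is matching the diagonal pattern $d_{i+1,\,n-i}$ of previewed entries appearing in \eqref{eqn:u_pg} to the components of $v$, and checking that the telescoping construction of $x_{k}$ respects simultaneously the previewed scalars and the non-previewed interval endpoints so that each of the $j<k$ inequalities is satisfied. The conceptual passage from the ``for all vertices'' form of \eqref{eqn:nec_cond_p} to feasibility at the specific in-$D$ value $d^{\ast}$ is short once the affineness is noted; the laborious part is the index-level verification that the constructed vector meets every constraint in \eqref{eqn:C_p_2}.
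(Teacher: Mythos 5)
Your argument is correct in outline, but it reaches the sufficiency half by a genuinely different route than the paper. You exhibit a \emph{static} point of $C_{p}$: freeze the preview at a constant $d^{\ast}\in D$, use convexity (affine inequalities on a box are maximized at vertices) to pass from the vertex form of \eqref{eqn:nec_cond_p} to feasibility at the specific $v$ induced by $d^{\ast}$, extract a $w$ in the resulting intersection, and telescope $x_{k}=w+\sum_{m=k}^{n}d^{\ast}_{m}$ so that each constraint \eqref{eqn:C_p_2} collapses onto a constraint already guaranteed for $w$. The paper instead argues \emph{dynamically}: it reuses the disturbance-feedforward controller $u(0)=g(d_{1:p}(0))$ built in the proof of Theorem~\ref{thm:max_inv_set}, starts from an arbitrary $x(0)\in\mathbf{B}$, and observes that after $n$ steps (where nilpotency of $\overline{A}$ erases the initial condition) the augmented state satisfies the conditions of Theorem~\ref{thm:state_nec_cond}, hence lies in $C_{p}$. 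The paper's route avoids any new index bookkeeping and yields the stronger byproduct that $C_{p}$ is reached from \emph{every} initial state in $\mathbf{B}$ within $n$ steps; your route is self-contained and produces an explicit closed-form member of $C_{p}$, at the price of the $p<n$ verification you flag. That verification does close: for indices $m\le n-\overline{p}$ the condition \eqref{eqn:nec_cond_p} carries the full intervals $[c_{m,1},c_{m,2}]$ (absorbed into $\widehat{b}_{k,\cdot}$), and substituting the particular values $d^{\ast}_{m}\in[c_{m,1},c_{m,2}]$ only shrinks the left-hand sets, so every containment required by \eqref{eqn:C_p_1}--\eqref{eqn:C_p_2} is implied by the corresponding one in \eqref{eqn:nec_cond_p}. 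So the proposal is sound, but to be a complete proof it would need that index-level check written out rather than anticipated.
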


\begin{corollary} \label{cor:variant} 
	If instead of $\Sigma_{B}$ in \eqref{eqn:sys_B}, we consider a system in the following form:
	\begin{align} \label{eqn:sys_B_var} 
	\Sigma_{v}: x(t+1) = \overline{A} x(t)+ \overline{B}u(t) + \overline{E} d(t),
	\end{align}
	for $d(t)\in D_{v} \subseteq \R^{l}$ and some $\overline{E}\in \R^{n\times l}$. Then, we first define system $\Sigma_{B}'$ in Brunovsky canonical form  
	\begin{align}
	\Sigma_{B}': x(t+1) = \overline{A}x(t) + \overline{B}u(t) + d(t),
	\end{align}
	with $d(t)\in \overline{E}D_{v} \subseteq \R^{n}$. We have the closed-form expression of the maximal controlled invariant set $C_{p}$ of the $p$-augmented system of $\Sigma_{B}'$ within $  \mathbf{B}\times D^{p}\times \R$. The maximal controlled invariant set $C_{v}$ of the $p$-augmented system of $\Sigma_{v}$ within $ \mathbf{B}\times D^{p}\times \R$ is nonempty if and only if $C_{p}$ is nonempty and
	\begin{align}
	C_{v} = \{(x,  d_{1:p}) \mid (x, \overline{E}d_1, \overline{E}d_{2}, \cdots, \overline{E}d_{p})\in C_{p}\}.  \label{eqn:43} 
	\end{align}
\end{corollary}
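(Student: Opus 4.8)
The plan is to read Corollary \ref{cor:variant} as a change-of-variables (semi-conjugacy) statement and to transfer controlled invariance through the linear map $\overline{E}$ applied coordinatewise to the preview register. Define the lifting map $\Phi:\R^{n}\times D_{v}^{p}\to \R^{n}\times (\overline{E}D_{v})^{p}$ by $\Phi(x,d_{1:p})=(x,\overline{E}d_{1},\ldots,\overline{E}d_{p})$. First I would verify that $\Phi$ carries trajectories of $\Sigma_{v,p}$ to trajectories of $\Sigma_{B,p}'$: if $(x(t),d_{1:p}(t))$ evolves under $\Sigma_{v,p}$ with input $u(t)$ and incoming disturbance $d(t)\in D_{v}$, then, writing $d_{i}'(t)=\overline{E}d_{i}(t)$, the $x$-update reads $x(t+1)=\overline{A}x(t)+\overline{B}u(t)+\overline{E}d_{1}(t)=\overline{A}x(t)+\overline{B}u(t)+d_{1}'(t)$, while the shift equations give $d_{i}'(t+1)=\overline{E}d_{i+1}(t)=d_{i+1}'(t)$ and $d_{p}'(t+1)=\overline{E}d(t)$. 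Hence $\Phi$ of the $\Sigma_{v,p}$ successor equals the $\Sigma_{B,p}'$ successor under the same input $u(t)$ and incoming disturbance $\overline{E}d(t)\in \overline{E}D_{v}$. Writing $F$, $F'$ for the one-step maps of $\Sigma_{v,p}$, $\Sigma_{B,p}'$, the identity I will use repeatedly is $\Phi(F(\xi,u,d))=F'(\Phi(\xi),u,\overline{E}d)$. The goal is to prove $C_{v}=\Phi^{-1}(C_{p})$, which is exactly \eqref{eqn:43}.

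For the inclusion $\Phi^{-1}(C_{p})\subseteq C_{v}$ I would show $\Phi^{-1}(C_{p})$ is controlled invariant for $\Sigma_{v,p}$ within $\mathbf{B}\times D_{v}^{p}\times\R$ and then invoke maximality of $C_{v}$. Fix $\xi=(x,d_{1:p})\in\R^{n}\times D_{v}^{p}$ with $\Phi(\xi)\in C_{p}$. Invariance of $C_{p}$ (Theorem \ref{thm:max_inv_set}) supplies an input $u$ with $x\in\mathbf{B}$ and $F'(\Phi(\xi),u,d')\in C_{p}$ for all $d'\in\overline{E}D_{v}$. For any $d\in D_{v}$ the semi-conjugacy gives $\Phi(F(\xi,u,d))=F'(\Phi(\xi),u,\overline{E}d)\in C_{p}$, i.e.\ $F(\xi,u,d)\in\Phi^{-1}(C_{p})$. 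Since the input is unconstrained and $x\in\mathbf{B}$, the same $u$ keeps $\xi$ inside $\Phi^{-1}(C_{p})$ robustly, establishing controlled invariance.

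The harder direction is $C_{v}\subseteq\Phi^{-1}(C_{p})$, equivalently $\Phi(C_{v})\subseteq C_{p}$; here I would show $\Phi(C_{v})$ is controlled invariant for $\Sigma_{B,p}'$ and use maximality of $C_{p}$. Fix $\xi'=\Phi(\xi)$ with $\xi\in C_{v}$. Invariance of $C_{v}$ yields an input $u$, \emph{independent of the disturbance}, with $F(\xi,u,d)\in C_{v}$ for every $d\in D_{v}$. The key step is the quantifier over disturbances: every $d'\in\overline{E}D_{v}$ can be written $d'=\overline{E}d$ with $d\in D_{v}$, because $\overline{E}D_{v}$ is by definition the image of $D_{v}$ under $\overline{E}$; then $F'(\xi',u,d')=\Phi(F(\xi,u,d))\in\Phi(C_{v})$. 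Thus the single $u$ handles all admissible disturbances of $\Sigma_{B,p}'$, so $\Phi(C_{v})$ is controlled invariant and $\Phi(C_{v})\subseteq C_{p}$.

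Combining the two inclusions gives $C_{v}=\Phi^{-1}(C_{p})$. For nonemptiness, since $\Phi$ maps $\R^{n}\times D_{v}^{p}$ \emph{onto} $\R^{n}\times(\overline{E}D_{v})^{p}$ and $C_{p}$ lies in the latter, $C_{v}=\Phi^{-1}(C_{p})$ is empty exactly when $C_{p}$ is; together with Corollary \ref{cor:nonempty} (nonemptiness of $C_{p}$ is characterized by \eqref{eqn:nec_cond_p}) this yields the stated ``if and only if.'' The main obstacle is precisely the possible non-injectivity of $\overline{E}$: in the forward inclusion one must preclude a disturbance $d'\in\overline{E}D_{v}$ whose successor escapes $\Phi(C_{v})$, and the resolution is that the invariance of $C_{v}$ furnishes one input good for all $d\in D_{v}$ simultaneously, while surjectivity of $\overline{E}$ onto $\overline{E}D_{v}$ guarantees this single input covers every $d'$. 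This is also the reason the disturbance set of $\Sigma_{B}'$ must be taken as exactly $\overline{E}D_{v}$ rather than merely a hyperbox containing it.
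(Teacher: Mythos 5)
Your proposal is correct. Note that the paper itself gives no proof of Corollary~\ref{cor:variant} in the appendix, so there is nothing to compare against; the change-of-variables argument via $\Phi(x,d_{1:p})=(x,\overline{E}d_1,\ldots,\overline{E}d_p)$ is the natural route and you execute it completely: the semi-conjugacy identity, both maximality directions (including the key point that surjectivity of $\overline{E}$ onto $\overline{E}D_v$ lets the single disturbance-independent input from the invariance of $C_v$ cover every $d'\in\overline{E}D_v$), and the nonemptiness equivalence from $\Phi$ being onto $\R^n\times(\overline{E}D_v)^p$.
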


\begin{remark} \label{rm:two_move} 
	Adopting the idea from \cite{tzanis2021acc}, for a more general safe set in form of $P\times \R$, where $P$ is a polytope, we can construct a controlled invariant set of $\Sigma_{B,p}$	within $P\times D^{p}\times \R$ in $2$ moves: First, we construct a polytope in a lifted space that encodes all hyperboxes $ \mathbf{B}$ in $P$ and all states $(x,d_{1:p})$ within the maximal controlled invariant set within $ \mathbf{B}\times D^{p}\times \R$, based on the nonemptyness condition \eqref{eqn:nec_cond_p} and the closed-form expression of $C_{p}$. Then, we project this lifted set onto its first $n(p+1)$ coodinates, equal to the union of the maximal controlled invariant set within $ \mathbf{B}\times D^{p}\times \R$ for all hyperboxes $ \mathbf{B}$ contained by $P$. By construction, this set is a controlled invariant set in $P\times D^{p}\times \R$. 
	
	Furthermore, as stated in Remark 1 of \cite{anevlavix2019computing}, any controllable system with a polytopic safe set (including input constraints) can be transformed into system in Brunovsky canonical form with a safe set in form of $P\times \R$. Thus, our results in this section can be used to compute controlled  invariant sets for $p$-augmented systems of a controllable system.
	\END
\end{remark}

According to the closed-form expression of the maximal controlled invariant set $C_{p}$, we show the convergence property of $C_{p}$ for $p \geq  n$ in the following theorem.
\begin{theorem} \label{thm:p_g_n} 
	For preview time $p > n$, the maximal controlled invariant set $C_{p}$ is equal to the Cartesian product of the maximal controlled invariant set $C_{n}$ of $\Sigma_{B,n}$ and the set $D^{p-n}$, that is $C_{p} = C_{n}\times D^{p-n}$.
\end{theorem}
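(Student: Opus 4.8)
The plan is to work directly from the closed-form description of $C_p$ furnished by Theorem~\ref{thm:max_inv_set} and to exploit the fact that the defining constraints \eqref{eqn:C_p_2} are indexed by pairs $(k,j)$ with $2 \le k \le n$ and $1 \le j < k$, so that the ``depth'' $k-j$ always lies in $\{1,\ldots,n-1\}$. The first step is to simplify \eqref{eqn:C_p_2} under the hypothesis $p>n$. Since $k-j \le n-1 < p$, the upper summation limit satisfies $\min(k-j,p)=k-j$, so the left-hand side is $x_k + \sum_{i=1}^{k-j} d_{i,k-i}$. Moreover, because $k-j \le n-1 < p+1$, the index range of the interval-subtraction term $\sum_{i=p+1}^{k-j}[c_{k-i,1},c_{k-i,2}]$ is empty, so by the stated convention the right-hand side collapses to $[b_{j,1},b_{j,2}]$. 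Hence, for $p>n$, every constraint \eqref{eqn:C_p_2} takes the form $x_k + \sum_{i=1}^{k-j} d_{i,k-i} \in [b_{j,1},b_{j,2}]$.

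The crucial observation is then that each such constraint involves only $x$ and the preview blocks $d_1,\ldots,d_{k-j}$, and since $k-j \le n-1$ it never references $d_n, d_{n+1},\ldots,d_p$. Thus the only requirements placed on the tail blocks $d_n,\ldots,d_p$ come from \eqref{eqn:C_p_1}, namely membership in $D$. I would next specialize exactly the same simplification to the case $p=n$ in order to read off $C_n$: with $p=n$ one again has $k-j \le n-1 < n$, so $\min(k-j,n)=k-j$ and the subtraction sum $\sum_{i=n+1}^{k-j}$ is empty, yielding the identical constraints $x_k + \sum_{i=1}^{k-j} d_{i,k-i} \in [b_{j,1},b_{j,2}]$ on $(x,d_1,\ldots,d_{n-1})$, together with $x\in\mathbf{B}$ and $d_{1:n}\in D^n$. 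In particular $C_n$ is precisely the set of $(x,d_{1:n})$ with $x\in\mathbf{B}$, $d_1,\ldots,d_n\in D$, and those active constraints holding, with $d_n$ entirely free modulo $d_n\in D$.

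Finally I would assemble the equality by comparing the two membership descriptions. A point $(x,d_{1:p})$ lies in $C_p$ if and only if $x\in\mathbf{B}$, $d_1,\ldots,d_p\in D$, and the active constraints on $(x,d_1,\ldots,d_{n-1})$ hold. A point $(x,d_{1:p})$ lies in $C_n\times D^{p-n}$ if and only if $(x,d_{1:n})\in C_n$ and $d_{n+1},\ldots,d_p\in D$, i.e.\ if and only if $x\in\mathbf{B}$, $d_1,\ldots,d_p\in D$, and the \emph{same} active constraints hold. Since these two descriptions coincide, $C_p = C_n\times D^{p-n}$, which is the claim.

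The argument is in essence index bookkeeping, so the main obstacle is to verify carefully that the active constraints genuinely never involve $d_n,\ldots,d_p$---that is, the strict bound $k-j \le n-1$---and that the empty-sum convention is applied consistently on both the $p>n$ and $p=n$ sides, so that the two constraint systems come out literally identical rather than merely analogous. Once this bookkeeping is pinned down, the Cartesian-product structure is immediate and no invariance computation is needed, since both sets are already given in closed form by Theorem~\ref{thm:max_inv_set}.
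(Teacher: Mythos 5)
Your proposal is correct and follows essentially the same route as the paper's own proof: both arguments rest on the observation that $k-j\le n-1$ forces $\min(k-j,p)=k-j$ and empties the interval-subtraction term once $p\ge n$, so the active constraints in \eqref{eqn:C_p_2} involve only $x$ and $d_1,\dots,d_{n-1}$ and are literally identical for $p=n$ and $p>n$, leaving the tail blocks constrained only by membership in $D$. Your version simply spells out the index bookkeeping that the paper leaves as ``easy to check.''
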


Theorem \ref{thm:p_g_n} indicates that for system $\Sigma_{B}$ in Brunovsky canonical form with a safe set $\mathbf{B}\times \R$, the preview time longer than $p= n$ is not necessary. However, given a state $(x,d_{1:p})$ in the maximal controlled invariant set $C_{n}\times D^{p-n}$, the admissible input set with the maximal size is obtained when preview is $n+1$, that is
\begin{align*}
\mathcal{A}(C_{n}, (x, d_{1:n})) \subseteq \mathcal{A}(C_{n+1}, (x, d_{1:n+1})) 
=  \mathcal{A}( C_{p}, (x,d_{1:p})).
\end{align*}
That is, the critical preview time $p_0  = n+1$.

We are curious if the property $p_0=n+1$ holds for systems in Brunovsky canonical form with arbitrary polytopic safe sets. Unfortunately, the following example shows that for general safe sets, a critical preview time may not exist.  
\begin{example}
	Consider the $1$-dimensional system $\Sigma$ and the safe set $S_{xu}$ defined in Example \ref{exp:1d}. We replace $u(t)$ in $\Sigma$ by $u(t) = -ax(t) + v(t),$ where $v(t)$ is the new control input. Then, the $1$-dimensional dynamics $\Sigma'$ with respect to the state $x$ and the input $v$ is in Brunovsky canonical form. The safe set for this new dynamics is $S_{xu}' = \{(x,v) \mid (x,-ax + v)\in S_{xu}\} $.  
	
	Let $C_{max,p}$ and $C_{max,p}'$ be the maximal controlled invariant sets of $\Sigma$ within $S_{xu}$ and $\Sigma'$ within $S_{xu}'$ respectively. It can be easily shown that $C_{max,p}' = C_{max,p}$. Thus, $C_{max,p}'$ strictly contains $C_{max,n}'\times [- \gamma, \gamma]^{p-n}$.\END
\end{example}

Finally, recall that an outer bound on controlled invariant sets of $\Sigma_{B,p}$ is given in Section \ref{sec:prelim} by the Cartesian product of {the maximal controlled invariant set of the disturbance-collaborative system and the set $D^{p}$, that is the right hand set of \eqref{eqn:upper_lower_bounds}. We wonder the relation between $C_{n}$ and this outer bound, which is revealed by the next theorem.
	
	\begin{theorem} \label{thm:two_bounds} 
		For preview time $p>n$, if nonemptyness condition \eqref{eqn:nec_cond_p} holds, then the projection of $C_{p}$ onto the first $n$ coordinates is equal to the maximal controlled invariant set $C_{max,co}$ of the disturbance-collaborative system $\mathcal{D}(\Sigma_{B})$ within safe set $\mathbf{B}\times \R$, that is $$C_{max,co} =PROJ_{1:n}(C_{p}) = PROJ_{1:n}(C_{n}).$$	
	\end{theorem}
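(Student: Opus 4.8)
The plan is to prove the two nontrivial equalities separately, after first disposing of the trivial one. By Theorem~\ref{thm:p_g_n}, for $p > n$ we have $C_p = C_n \times D^{p-n}$, and since the nonemptyness condition \eqref{eqn:nec_cond_p} forces $D \ne \emptyset$, projecting onto the first $n$ coordinates discards the $D^{p-n}$ factor; hence $PROJ_{1:n}(C_p) = PROJ_{1:n}(C_n)$ with no further argument. The entire content therefore reduces to showing $C_{max,co} = PROJ_{1:n}(C_n)$, which I would establish by double inclusion.

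For the inclusion $PROJ_{1:n}(C_n) \subseteq C_{max,co}$, I would invoke the general outer bound already available. By Theorem~\ref{thm:max_inv_set} we have $C_n = C_{max,n}$, and Theorem~\ref{thm:inf_p} gives $C_{max,n} \subseteq C_{max,co} \times D^n$. Projecting onto the first $n$ coordinates yields $PROJ_{1:n}(C_n) \subseteq C_{max,co}$ directly, so this direction costs essentially nothing beyond the earlier results.

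The reverse inclusion $C_{max,co} \subseteq PROJ_{1:n}(C_n)$ is where the real work lies, and I would argue it from the structure of $\mathcal{D}(\Sigma_B)$. Writing the collaborative dynamics componentwise, $x_i(t+1) = x_{i+1}(t) + u_{d,i}(t)$ for $i < n$ and $x_n(t+1) = u(t) + u_{d,n}(t)$, one unrolls to obtain, for every $k$ and $t$ with $k + t \le n$, the closed form $x_k(t) = x_{k+t}(0) + \sum_{m=0}^{t-1} u_{d, k+m}(t-1-m)$, which is independent of the control $u$, because the coordinate index $k, k+1, \dots, k+t$ never reaches the $u$-driven last row. Now take any $x \in C_{max,co}$: there is a safe collaborative trajectory with inputs $u(t)$ and $u_d(t) \in D$ keeping $x(t) \in \mathbf{B}$ for all $t$, so in particular each such $x_k(t)$ lies in $[b_{k,1}, b_{k,2}]$. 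Reindexing this family of inequalities, writing the old $k$ as the target interval index $j$ and $k+t$ as the state coordinate, reproduces exactly the inequalities \eqref{eqn:C_p_2} defining $C_n$, under the identification $d_i = u_d(i-1)$, which is the correct correspondence since in the augmented system $d_i(0)$ is precisely the disturbance applied $i-1$ steps ahead. Because the realized $u_d(0), \dots, u_d(n-2)$ lie in $D$ and the component $d_n$ appears in no constraint, setting $d_i = u_d(i-1)$ for $i \le n-1$ and $d_n \in D$ arbitrary produces a point $(x, d_{1:n}) \in C_n$, whence $x \in PROJ_{1:n}(C_n)$.

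The main obstacle I anticipate is the bookkeeping in this last reindexing step: one must verify that the uncontrollable constraints $x_k(t) \in [b_{k,1}, b_{k,2}]$ for $k+t \le n$ coincide term by term with \eqref{eqn:C_p_2}, including that the shared disturbance vectors $u_d(0), \dots, u_d(n-2)$ are accessed at consistent coordinates across the different constraints. A careful substitution, matching the summand $u_{d, k-i}(i-1)$ with $d_{i, k-i}$, confirms the match; the fact that only the first $n-1$ collaborative disturbances enter these hard constraints is also the structural reason why preview time $n$ already captures the collaborative limit.
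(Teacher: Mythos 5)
Your proposal is correct and follows essentially the same route as the paper's proof: the first equality via Theorem~\ref{thm:p_g_n}, the inclusion $PROJ_{1:n}(C_n)\subseteq C_{max,co}$ from the outer bound \eqref{eqn:upper_lower_bounds}, and the reverse inclusion by taking a safe trajectory of $\mathcal{D}(\Sigma_B)$ from $x\in C_{max,co}$ and using its collaborative disturbances $u_d(0),\dots$ as the preview vector $d_{1:n}$ to verify the defining constraints \eqref{eqn:C_p_1}--\eqref{eqn:C_p_2} of $C_n$. Your explicit unrolling and reindexing merely fills in the step the paper dismisses with ``we can easily verify,'' and your observation that $d_n$ is unconstrained beyond $d_n\in D$ is a harmless variant of the paper's choice $d_n=u_d(n-1)$.
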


	\section{Illustrative Examples}
	In this section, we want to study the benefits of preview on disturbances via several concrete examples. 
	
	\subsection{Impact of Preview on Disturbance Tolerance}
	We demonstrate the impact of preview on disturbance tolerance via our results on systems in Brunovksy canonical form. We fix the state dimension $n=10$ and the safe set $ \mathbf{B} = \Pi_{i=1}^{n} [-1,1]$. Then, we parametrize the disturbance set $D = \Pi_{i=1}^{n} [-c,c]$ by a positive number $c>0$. We are interested in the largest $c$ for the augmented system $\Sigma_{B,p}$ to have nonempty controlled invariant sets within $ \mathbf{B}\times D^{p}\times \R$. According to Corollary \ref{cor:nonempty}, we can utilize the condition on nonempty controlled invariant set given by \eqref{eqn:nonempty} to determine the largest possible $c$. 
	
	By plugging $b_{k,1} = -1$, $b_{k,2}=1$, $c_{k,1} = -c$ and $c_{k,2} = c$ for all $k$ from $1$ to $n$ into  \eqref{eqn:nonempty} , we obtain an upper bound on $c$ such that \eqref{eqn:nonempty}  holds. The largest $c$ computed for different preview time $p$ are shown in Fig. \ref{fig:largest_c}. As we expect, when the preview time increases, a larger disturbance set can be handled, due to the power of preview.  
	
	\begin{figure}[]
		\centering
		\includegraphics[width=0.45\textwidth,trim=0 0 0 -10]{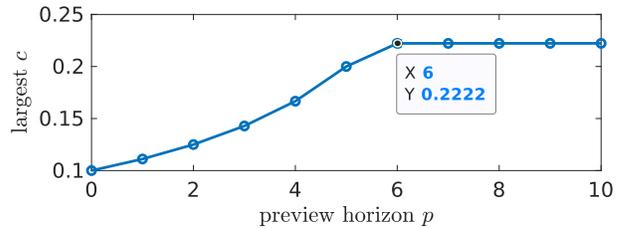}
		\caption{\small The largest disturbance bound $c$ versus preview time $p$ for the system in Brunovsky canonical form ($n=10$) with hyperbox safe set. }
		\label{fig:largest_c}
		\vspace{-0.5cm}
	\end{figure}
	
	In addition, we observe in Fig. \ref{fig:largest_c} that the largest $c$ stops increasing after $p \geq 6$. This observation suggests that a disturbance set with $c > 0.2222$ may lead to an empty controlled invariant set for any preview time $p$. With some calculation, it can be verified that for $c > 2/9$, the necessary condition \eqref{eqn:nonempty} does not hold for all $ p \geq 0$ and thus the maximal controlled invariant set is always empty no matter how large the $p$ is. 
	
	\subsection{Lane Keeping Control with Preview}
	To show the usefulness of preview, we present how preview helps the driver-assist system to keep a vehicle within lanes. We use a $4$-dimensional linearized bicycle model with respect to constant longitudinal speed $30m/s$ from \cite{smith}. The state space consists of lateral displacement $y$, lateral velocity  $v$, yaw angle $\Delta \Psi$ and yaw rate $r$. The disturbance $r_{d}$ with $ \vert r_{d} \vert \leq 0.04$ considered in this simplified model is a quantity related to the road curvature that perturbs the yaw angle. The control input $u$ is the steering angle, with constraints $u\in [-\pi/2,\pi/2]$. 
	
	The safe set $S_{xu}$ is the set of state-input pairs within bounds $ \vert y \vert\leq 0.9 $, $ \vert v \vert \leq 1.2$, $ \vert \Delta\Phi \vert \leq 0.05 $ and $ \vert r \vert \leq 0.3 $, and $ \vert u \vert \leq \pi/2$. We set the preview time $p=5$. We first compute the maximal controlled invariant set within $S_{xu}$ for system without preview, denoted by $C_{max,0}$. Then, we use Method 2 to grow the seed set $C_{max,0}\times D^{5}$ for the $p$-augmented system over $10$ iterations, the result of which is denoted by $C_{io,5}$. Numerically we find that $C_{io,5}$ strictly contains $C_{max,0}\times D^{5}$. We also try the idea in Remark \ref{rm:two_move} to obtain a controlled invariant set based on our results in Section \ref{sec:brunov}, but the resulting set is contained by $C_{max,0}\times D^{p}$, which is too conservative to be useful.
	
	Next, we find a point $(x_0, d_1, \cdots, d_{5}) $ belonging to the set difference $C_{io,5} \setminus C_{max,0}\times D^{5}$ and simulate $2$ trajectories starting at $x_0$ with the first $5$ disturbances $d_{1:5}$, using the two controlled invariant sets $C_{max,0}$ and $C_{io,5}$ respectively. The controller consists of $2$ parts: First, we have a nominal state feedback controller, designed via linear quadratic regulator for the $ p $-augmented system. Then, at each time instant, we supervise the control input from the nominal controller by projecting that input onto the admissible input set at current state with respect to $C_{max,0}$ or $C_{io,5}$. If the admissible input set happens to be empty at some time instants, then we project the nominal input onto the input constraint set $[-\pi/2,\pi/2]$. The resulting vehicle maneuvers are shown by Fig. \ref{fig:maneuver}, where we find that the trajectory under the supervision of the admissible input set with respect to $C_{io,5}$ stays within the lane as required by the safety constraints during the simulation time span, but the trajectory under the supervision with respect to $C_{max,0}$ violates the constraints on lateral displacement $y$ and drives out of the lane at the $2$nd time step. This observation meets our expectation since the initial condition was not in $C_{max,0}$. This example demonstrates how the preview on future disturbances enables controllers to operate safely from a larger set of initial conditions.
	
	\begin{figure}[]
		\centering
		\includegraphics[width=0.36\textwidth, trim=0 0 0 -10]{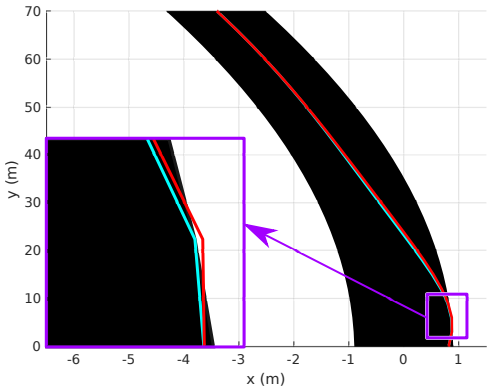}
		\caption{\small The vehicle maneuvers under the linearized bicycle model with supervised LQR controller. The dark region indicates the safe region in the plane, that is the lane. The cyan curve is the maneuver corresponding to  $C_{oi,5}$. The red curve is the maneuver corresponding to $C_{max,0}$. }
		\label{fig:maneuver}
		\vspace{-0.7cm}
	\end{figure}
	
	\section{Conclusion}
	In the first part of this work, we study general properties of controlled invariant sets for systems with preview and the implications of those properties, including a strategy to choose a preview time. In the second part, we study systems in Brunovsky canonical form with hyperbox safe sets, for which we derive the maximal controlled invariant set of the $ p $-augmented system in closed form. The impact of preview on the controlled invariant sets can be directly analyzed using this closed-form expression, by help of which we prove the existence of a critical preview time for this class of systems. In future work, we plan to study noisy preview information.

\bibliographystyle{IEEEtran}
\bibliography{ref}

\begin{thebibliography}{10}
\providecommand{\url}[1]{#1}
\csname url@samestyle\endcsname
\providecommand{\newblock}{\relax}
\providecommand{\bibinfo}[2]{#2}
\providecommand{\BIBentrySTDinterwordspacing}{\spaceskip=0pt\relax}
\providecommand{\BIBentryALTinterwordstretchfactor}{4}
\providecommand{\BIBentryALTinterwordspacing}{\spaceskip=\fontdimen2\font plus
\BIBentryALTinterwordstretchfactor\fontdimen3\font minus
  \fontdimen4\font\relax}
\providecommand{\BIBforeignlanguage}[2]{{%
\expandafter\ifx\csname l@#1\endcsname\relax
\typeout{** WARNING: IEEEtran.bst: No hyphenation pattern has been}%
\typeout{** loaded for the language `#1'. Using the pattern for}%
\typeout{** the default language instead.}%
\else
\language=\csname l@#1\endcsname
\fi
#2}}
\providecommand{\BIBdecl}{\relax}
\BIBdecl

\bibitem{yu2020power}
C.~Yu, G.~Shi, S.-J. Chung, Y.~Yue, and A.~Wierman, ``The power of predictions
  in online control,'' \emph{arXiv preprint arXiv:2006.07569}, 2020.

\bibitem{xu2019design}
S.~Xu and H.~Peng, ``Design, analysis, and experiments of preview path tracking
  control for autonomous vehicles,'' \emph{IEEE Transactions on Intelligent
  Transportation Systems}, vol.~21, no.~1, pp. 48--58, 2019.

\bibitem{ozdemir2013design}
A.~A. Ozdemir, P.~Seiler, and G.~J. Balas, ``Design tradeoffs of wind turbine
  preview control,'' \emph{IEEE Transactions on Control Systems Technology},
  vol.~21, no.~4, pp. 1143--1154, 2013.

\bibitem{kajita2003biped}
S.~Kajita, F.~Kanehiro, K.~Kaneko, K.~Fujiwara, K.~Harada, K.~Yokoi, and
  H.~Hirukawa, ``Biped walking pattern generation by using preview control of
  zero-moment point,'' in \emph{2003 IEEE International Conference on Robotics
  and Automation (Cat. No. 03CH37422)}, vol.~2.\hskip 1em plus 0.5em minus
  0.4em\relax IEEE, 2003, pp. 1620--1626.

\bibitem{bertsekas1972infinite}
D.~Bertsekas, ``Infinite time reachability of state-space regions by using
  feedback control,'' \emph{IEEE Transactions on Automatic Control}, vol.~17,
  no.~5, pp. 604--613, 1972.

\bibitem{rungger2017computing}
M.~Rungger and P.~Tabuada, ``Computing robust controlled invariant sets of
  linear systems,'' \emph{IEEE Transactions on Automatic Control}, vol.~62,
  no.~7, pp. 3665--3670, 2017.

\bibitem{sheridan1966three}
T.~B. Sheridan, ``Three models of preview control,'' \emph{IEEE Transactions on
  Human Factors in Electronics}, no.~2, pp. 91--102, 1966.

\bibitem{tomizuka1975optimal}
M.~Tomizuka and D.~Whitney, ``Optimal discrete finite preview problems (why and
  how is future information important?),'' 1975.

\bibitem{katayama1985design}
T.~Katayama, T.~Ohki, T.~Inoue, and T.~Kato, ``Design of an optimal controller
  for a discrete-time system subject to previewable demand,''
  \emph{International Journal of Control}, vol.~41, no.~3, pp. 677--699, 1985.

\bibitem{garcia1989model}
C.~E. Garcia, D.~M. Prett, and M.~Morari, ``Model predictive control: theory
  and practice—a survey,'' \emph{Automatica}, vol.~25, no.~3, pp. 335--348,
  1989.

\bibitem{laks2011model}
J.~Laks, L.~Pao, E.~Simley, A.~Wright, N.~Kelley, and B.~Jonkman, ``Model
  predictive control using preview measurements from lidar,'' in \emph{49th
  AIAA Aerospace Sciences Meeting including the New Horizons Forum and
  Aerospace Exposition}, 2011, p. 813.

\bibitem{liu2019safety}
Z.~Liu and N.~Ozay, ``Safety control with preview automaton,'' in \emph{2019
  IEEE 58th Conference on Decision and Control (CDC)}.\hskip 1em plus 0.5em
  minus 0.4em\relax IEEE, 2019, pp. 1557--1564.

\bibitem{liu2020scalable}
Z.~Liu, L.~Yang, and N.~Ozay, ``Scalable computation of controlled invariant
  sets for discrete-time linear systems with input delays,'' \emph{arXiv
  preprint arXiv:2003.04953}, 2020.

\bibitem{kupferman2011synthesis}
O.~Kupferman, D.~Sadigh, and S.~A. Seshia, ``Synthesis with clairvoyance,'' in
  \emph{Haifa Verification Conference}.\hskip 1em plus 0.5em minus 0.4em\relax
  Springer, 2011, pp. 5--19.

\bibitem{holtmann2010degrees}
M.~Holtmann, {\L}.~Kaiser, and W.~Thomas, ``Degrees of lookahead in regular
  infinite games,'' in \emph{International Conference on Foundations of
  Software Science and Computational Structures}.\hskip 1em plus 0.5em minus
  0.4em\relax Springer, 2010, pp. 252--266.

\bibitem{klein2015much}
F.~Klein and M.~Zimmermann, ``How much lookahead is needed to win infinite
  games?'' in \emph{International Colloquium on Automata, Languages, and
  Programming}.\hskip 1em plus 0.5em minus 0.4em\relax Springer, 2015, pp.
  452--463.

\bibitem{antsaklis2006linear}
P.~J. Antsaklis and A.~N. Michel, \emph{Linear systems}.\hskip 1em plus 0.5em
  minus 0.4em\relax Springer Science \& Business Media, 2006, pp. 286--288.

\bibitem{tzanis2021acc}
\BIBentryALTinterwordspacing
T.~Anevlavis, Z.~Liu, N.~Ozay, and P.~Tabuada, ``An enhanced hierarchy for
  (robust) controlled invariance,'' submitted to ACC 2021. [Online]. Available:
  \url{https://tinyurl.com/y5j5hewc}
\BIBentrySTDinterwordspacing

\bibitem{anevlavix2019computing}
T.~Anevlavis and P.~Tabuada, ``Computing controlled invariant sets in two
  moves,'' in \emph{2019 IEEE 58th Conference on Decision and Control
  (CDC)}.\hskip 1em plus 0.5em minus 0.4em\relax IEEE, 2019, pp. 6248--6254.

\bibitem{smith}
S.~W. Smith, P.~Nilsson, and N.~Ozay, ``Interdependence quantification for
  compositional control synthesis with an application in vehicle safety
  systems,'' in \emph{Decision and Control (CDC), 2016 IEEE 55th Conference
  on}.\hskip 1em plus 0.5em minus 0.4em\relax IEEE, 2016, pp. 5700--5707.

\end{thebibliography}
\section{Appendix}

\begin{proof}[Proof of Theorem \ref{thm:cartesian}]
	Let  $(x,d_{1:p_2}) \in C_{p_1}\times D^{p_2-p_1}$. We want to prove that there exists a safe input $u_s$ such that $(x,u_{s})\in S_{xu}$ and $(f(x,u_{s},d_1), d_{2:p_2}, d)\in C_{p_1}\times D^{p_2-p_1}$ for all $d\in D$. That is, find a $u_{s}$ such that $(x,u_{s})\in S_{xu}$ and $((f(x,u_{s},d_1),d_{2:p_{1}+1})\in C_{p_1}$. Such a $u_{s}$ can be picked from the admissible input set $\mathcal{A}((x,d_{1:p_1}), C_{p_1})$ at the truncated state $(x,d_{1:p_1})$.
\end{proof}

\begin{proof}[Proof of Example \ref{exp:1d}]
	Suppose we know $C_{max,p}$. Then compute the set 
	\begin{align}
	F = \{a^{p} x + \sum^{p}_{i=1} a^{p-i}d_{i} \mid (x, d_{1:p})\in C_{max,p}\}. 
	\end{align}
	Consider the auxiliary dynamics 
	\begin{align}
	\Sigma_{aux}: z(t+1) = az(t) + a^{p} u(t) + d(t),
	\end{align}
	with $d(t)\in [- \gamma, \gamma]$. We want to show $F$ is a controlled invariant set of the auxiliary system within the  auxiliary safe set $\R\times [- \beta, \beta]$.
	
	Let $z\in F$. There exists $(x,d_{1:p})\in C_{max,p}$ such that $z = a^{p}x + \sum_{i=1}^{p}a^{p-i}d_{i}$. There exists $u\in [- \beta, \beta] $ such that $(ax+u+d_1,d_{2}, \cdots, d_{p}, d)\in C_{max,p}$ for all $d\in [- \gamma, \gamma]$. Then, $ a z +  a^{p}u + d = a^{p+1} x + a^{p}u +a^{p}d_1+ \sum^{p}_{i=2} a^{p-i+1} d_i +d \in F$  for all $d\in [- \gamma, \gamma]$.  That is, $F$ is controlled invariant with respect to the auxiliary dynamics.
	
	Since the safe set, input set and disturbance set are symmetric,  $C_{max,p}$ must be a convex set symmetric with respect to origin, that is $\xi\in C_{max,p}$ if and only if $-\xi\in C_{max,p}$. Thus, $F$ has to be an interval in $\R$ symmetric with respect to $0$. Suppose $F = [-b,b]$. For $F$ being controlled invariant, $b$ needs to be greater than or equal to $\gamma$, and there needs to exists $u\in [- \beta, \beta]$ such that $a b + a^{p}u + d \leq b $ for all $d\in [- \gamma, \gamma]$, which implies $ ab  - a^{p} \beta + \gamma \leq b$. That is $ b\leq  (a^{p} \beta - \gamma) / (a-1): = \overline{b}$. Thus, we have $F \subseteq [ -\overline{b}, \overline{b}] $.  (Note that for $F$ being nonempty, parameters $a$,  $ \beta$ and $ \gamma$ must satisfy $  ( a^{p} \beta - \gamma) / (a-1) = \overline{b} \geq \gamma$, that is $a^{p-1} \beta \geq  \gamma$.) Thus,  $C_{max,p} $ must be contained by the set $ C:= \{(x,d_{1:p}) \mid \vert a^{p}x+ \sum^{p}_{i=1}a^{p-i}d_{i} \vert \leq \overline{b}, \vert d_{i} \vert \leq  \gamma  \}. $

	Next, we want to show that $C$ is a controlled invariant set within $[-r,r]\times [- \gamma, \gamma]^{p}\times [- \beta, \beta]$. Let $(x,d_{1:p})\in C$. Define $z= a^{p}x + \sum^{p}_{i=1} a^{p-i}d_{i} $. Then, $ \vert z \vert \leq  \overline{b}$ and thus $ \vert z/a^{p} \vert = \vert x + \sum^{p}_{i=1} d_{i} / a^{i}  \vert \leq  \overline{b}/ a^{p} =  ( \beta - \gamma/ a^{p})/ (a-1)$. Note that $ \vert d_{i} \vert \leq  \gamma $. Thus, $ \vert x \vert \leq  ( \beta - \gamma/ a^{p}) / (a-1) + \gamma\sum^{p}_{i=1}1/a^{i}  < (\beta + \gamma)/ (a-1) \leq r$. Thus, $C \subseteq [-r,r]\times [- \gamma, \gamma]^{p}$.  Also, it is easy to check that  $[ - \overline{b}, \overline{b}]$ is a controlled invariant set of the auxiliary system within $\R\times [- \beta, \beta]$. Thus, there exists $u\in [- \beta, \beta]$ such that $ \vert az + a^{p}u + d \vert \in [-\overline{b}, \overline{b}]$ and thus $(ax+ u + d_1, d_2, \cdots, d)\in C$ for all $d\in [- \gamma, \gamma]$. Thus, $C$ is a controlled invariant set within $[-r,r]\times [- \gamma, \gamma]^{p}\times [- \beta, \beta]$ and thus $ C = C_{max,p} $.   
\end{proof}

\begin{proof}[Proof of Theorem \ref{thm:inf_p}]
	Denote $C_{x,p} = PROJ_{1:n}( C_{max,p})$. We want to prove that $C_{x,p}$ is a controlled invariant set of $\mathcal{D}(\Sigma)$ within $S_{xu,co} = S_{xu}\times D$. 
	
	Let $x\in C_{x,p}$. Then, there exists $d_{1:p}\in D^{p}$ such that $(x,d_{1:p})\in C_{max,p}$. Since $C_{max,p}$ is a controlled invariant set $\Sigma_{p}$ within $S_{xu,p}$, there exists $u$ such that $(x, u)\in S_{xu}$, and also $(f(x,u,d_1), d_{2:p},d ) \in C_{max,p}$ for all $d\in D$, which implies $f(x,u,d_1)\in C_{x,p}$. Let $u_1 = u$ and $u_2 = d_1\in D$. Then, $(x, u_1,u_2)\in S_{xu}\times D$ and $x^{+} = f(x,u_1,u_2)\in C_{x,p}$. Thus, $C_{x,p}$ is controlled invariant with respect to $\Sigma_{d}$ within $S_{xu,co}$.  
	
We have	$C_{x,p} \subseteq C_{max,co}$, since $C_{max,co}$ is the maximal controlled invariant set within $S_{xu,co}$. By definition of projection, we have for all $p \geq 0$,
	\begin{align}
	C_{max,p} \subseteq C_{x,p} \times D^{p} \subseteq C_{max,co} \times D^{p}.
	\end{align}
\end{proof}

\begin{proof}[Proof of Theorem \ref{thm:nec_p}]
	As discussed in the paragraphs above Theorem \ref{thm:nec_p}, we want to derive necessary conditions on boundaries $b_{k,1}$ and $b_{k,2}$ of  $ \mathbf{B}$ such that \eqref{eqn:u_contain} holds. That is,  there exists $u(t)$ such that for $k$ from $1$ to $n$, 
	\begin{align}
		u(t) + \sum^{k-1}_{i=0} d_{1,n-i}(t+i)\in [b_{n-k+1}, b_{n-k+1,2}]. \label{eqn:thm_3_1} 
	\end{align}
	Suppose that the $p$-step preview at time $t$ is $d_{1:p}(t)$.  Then, for $i<p$, $d_{1,n-i}(t+i)=d_{i+1, n-i}(t)$. For $i \geq p$, $d_{1,n-i}$ is not previewed at time $t$ and thus can take arbitrary value in $[c_{n-i,1}, c_{n-i,2}]$. Thus, condition in \eqref{eqn:thm_3_1} is equivalent to the condition that there exists $u(t)$, for $k$ from $1$ to $n$, 
	\begin{align}
		u(t) + \sum^{\min (k,p)-1}_{i=0} d_{i+1,n-i}(t) + \sum^{k-1}_{i=p} [c_{n-i,1},c_{n-i,2}]\nonumber \\\subseteq [b_{n-k+1,1},b_{n-k+1,2}].  \label{eqn:thm_3_2} 
	\end{align}
By moving the second and third terms on the left side of \eqref{eqn:thm_3_2} to the right, \eqref{eqn:thm_3_2} becomes 
\begin{align}
		u(t)\in [\widehat{b}_{n-k+1,1},\widehat{b}_{n-k+1,2}] - \sum^{\min (k,p)-1}_{i=0} d_{i+1,n-i}(t), \label{thm_3_3} 
\end{align}
where $ \widehat{b}_{n-k+1,1}$ and $\widehat{b}_{n-k+1,2}$ are defined in Theorem \ref{thm:nec_p}. The necessary and sufficient condition of $u(t)$ satisfying \eqref{thm_3_3} for all $k$ from $1$ to $n$ is
\begin{align}
	\bigcap_{k=1}^{n} \left([\widehat{b}_{n-k+1,1},\widehat{b}_{n-k+1,2}] - \sum^{\min (k,p)-1}_{i=0} d_{i+1,n-i}(t)\right)\not=\emptyset. \label{eqn:thm_3_4} 
\end{align}
Finally, we denote $ \overline{p} = \min(n,p)$. Note that for $t$ longer than $p$, the vector $(d_{ \overline{p}, n- \overline{p}+1}(t), \cdots, d_{2,n-1}(t), d_{1,n}(t))$, denoted by $v$,  can take arbitrary value in $ \mathbf{B}_{d, \overline{p}} = \Pi_{k=n- \overline{p}}$.  Thus, for all $v\in \mathbf{B}_{d, \overline{p}}$, we need \eqref{eqn:thm_3_4} holds. That is equivalent to check that \eqref{eqn:thm_3_4} holds for all vertices of $ \mathbf{B}_{d, \overline{p}}$, which is the condition in \eqref{eqn:nec_cond_p} . 
\end{proof}

\begin{proof}[Proof of Theorem \ref{thm:state_nec_cond}]
 First, the condition in \eqref{eqn:C_p_1} is necessary since $C_{max,p} \subseteq B\times D^{p}$. Next, given that \eqref{eqn:C_p_1} holds, we want to derive conditions equivalent to the condition in \eqref{eqn:state_nec_cond}. 
	
	Let $x(0)= x $ and $ d(k)= d_{k+1}$ for $k$ from $0$ to $p-1$. 	According to dynamics \eqref{eqn:sys_B}, for any $t$, $ 0< t \leq  n-1$, for any $k$, $1 \leq k \leq n-t$, $x_{k}(t) = x_{k+t}(0) + \sum^{t}_{i=1} d_{k+t-i}(i-1) $. Define $\overline{p} = \min(p,t)$. Then, we can write $ x_{k}(t) =  x_{k+t}(0) + \delta_1 + \delta_2$, where $ \delta_1 = \sum^{ \overline{p}}_{i=1} d_{k+t-i}(i-1)$ and $ \delta_2 = \sum^{t}_{i= \overline{p}+1} d_{k+t-i}(i-1) = \sum^{t}_{i= p+1} d_{k+t-i}(i-1) $. Note that $ \delta_1$ is determined by the preview at time $t=0$, that is $d(t)$ for $t$ from $0$ to $p-1$. $ \delta_2$ is determined by future disturbances not previewed at $t=0$. For $t \geq  \overline{p}+1$, $d_{k+t-i}(i-1)$ can take arbitrary values in interval $ [c_{k+t-i,1},  c_{k+t-i,2}]$. Thus, $ \delta_2$ can take arbitrary values in interval $ \sum^{t}_{i= p+1} [c_{k+t-i,1}, c_{k+t-i,2}]$. Thus, depending on the value of $ \delta_2$, $x_{k}(t)$  can be any value in the interval $ x_{k+t}(0) + \delta_1 + \sum^{t}_{i= {p}+ 1} [c_{k+t-i,1},c_{k+t-i,2}]. $ To guarantee that $x_{k}(t) \in [b_{k,1}, b_{k,2}]$, it is necessary to have $$ x_{k+t}(0) + \delta_1 + \sum^{t}_{i={p}+1} [c_{k+t-i,1}, c_{k+t-i,2}]  \subseteq [b_{k,1}, b_{k,2}], $$ which is equivalent to
	\begin{align}  
	x_{k+t}(0) + \delta_1 \in [b_{k,1}, b_{k,2}]- \sum^{t}_{i= {p}+1} [c_{k+t-i,1},c_{k+t-i,2}]. \label{eqn:thm_4_1} 
	\end{align}
	By plugging the expression of $ \delta_1$ in the above formula, we obtain the condition in \eqref{eqn:C_p_2}. 
\end{proof}

\begin{proof}[Proof of Theorem \ref{thm:max_inv_set}]
	Since \eqref{eqn:C_p_1}  and \eqref{eqn:C_p_2} are necessary conditions for a state contained in $C_{max,p}$, we know that $C_{max,p} \subseteq C_{p}$. Next, we want to show that $C_{p}$ is a controlled invariant set of $\Sigma_{B,p}$ within $ \mathbf{B}\times D^{p}\times \R$, which implies $C_{p} = C_{max,p}$. 
	
	Let $(x(0),  d_{1:p}(0))\in C_{p}$. We want to find an input $u(0)\in \R$ such that the next state $ (x(1), d_{1:p}(1)) = (\overline{A}x(0)+\overline{B}u(0) + d_1(0), d_{2:p}(0), d(0))\in C_{p}$ for all $d(0)\in D$.  The idea for the remainder of the proof is to explicitly construct a $u(0)$ that satisfies the condition above. 

	Let $\overline{p} = \min(p,n)$. We define $v = (d_{\overline{p},n- \overline{p}+1}, d_{ \overline{p}-1,n- \overline{p}+2}, \cdots, d_{1,n})\in \R^{\overline{p}}$ where $d_{j,k}$ is the $k$ th entry of the vector $d_{j}(0)$. Then, the vector $v$ is contained by the convex hull $ CH(V_{d,\overline{p}})$ for $V_{d,\overline{p}}$ in Theorem \ref{thm:nec_p}. That is, there exists $ e_1$, ..., $e_{\overline{p}}\in V_{d, \overline{p}}$ and $ \alpha_{1}$, ..., $ \alpha_{ \overline{p}} \geq 0$ such that $\sum_{i} \alpha_{i} = 1$ and $v = \sum_{i=1} \alpha_{i} e_{i} $.
	For simplicity, given point $e_{i} \in V_{d,\overline{p}}$, we denote the interval on the left hand side of \eqref{eqn:nec_cond_p} corresponding to $e_{i}$ by $I(e_{i})$.
	For each $i \in \{1, \cdots, \overline{p}\} $, since \eqref{eqn:nec_cond_p} holds, there exists $u_{i}\in I(e_{i})$. Let $u(0) = \sum_{i=1} \alpha_{i} u_{i} $. 

	By construction of $u(0)$ and the proof of Theorem \ref{thm:nec_p}, $u(0)$ satisfies the condition \eqref{eqn:u_contain}  for any $d(0)\in D$, that is $x_{n-k+1}(k) \in [b_{n-k+1}, b_{n-k+1,2}]$ for all $k$ from $1$ to $n$ for any $d(0)\in D$.  Also, by the proof of Theorem \ref{thm:state_nec_cond},  $(x(0), d_{1:p}(0))\in C_{p}$ implies that $d_{1:p}(0) \in D^{p}$ and for $d(0)\in D$,  $x_{k}(t) \in [b_{k,1},b_{k,2}]$ for all $t$ from $0$ to $n-1$ and all $k$ from $1$ to $n-t$. Thus, for $(x(1), d_{1:p}(1))$ with respect to any $d(0)\in D$, we have $d_{1:p}(1)\in D^{p}$ and $x_{k}(t+1) \in [b_{k,1},b_{k,2}]$ for all $t$ from $0$ to  $n-1$ and all $k$ from $1$ to $n-t$, which implies $(x(1), d_{1:p}(1),d(0))\in C_{p}$ by the proof of Theorem \ref{thm:state_nec_cond}. 
\end{proof}

\begin{proof} [Proof of Corollary \ref{cor:nonempty} ]
	We want to show that condition in \eqref{eqn:nec_cond_p} implies nonemptyness of $C_{p}$. This is proven by construction. Note that $u(0)$ constructed in the proof of Theorem  \ref{thm:max_inv_set} only depends on the preview information $d_{1:p}(0)$. Let us denote $g:\R^{np} \rightarrow \R	$ a controller that maps a point $d_{1:p}(0)\in D^{p}$ to $u(0)$ we construct in the proof of Theorem \ref{thm:max_inv_set}. Then, let $x(0)$ be an arbitrary point in $ \mathbf{B}$ and let $d_{1:p}(t)\in D^{p}$ for $t \geq 0$. We pick control input $u(t) = g(d_{1:p}(t))$ for all $t \geq 0$. Then, we can verify that state $(x(n),d_{1:p}(n))$ at time $t=n$ satisfies the condition in Theorem \ref{thm:state_nec_cond}, that is $(x(n), d_{1:p}(n)) \in C_{p}$.
\end{proof}

\begin{proof}[Proof of Theorem \ref{thm:p_g_n}]
	Note that $k-j$ in \eqref{eqn:C_p_2} is less than $n$. Thus when $p \geq  n$, for all $k$, $2\leq k\leq n$ and all $j$, $1\leq j < k$, \eqref{eqn:C_p_2} becomes
	\begin{align}
	x_{k}+ \sum^{k-j}_{i=1} d_{i,k-i} \subseteq [b_{j,1}, b_{j,2}],
	\end{align}
	which is independent of $p$.  Thus, by definition of $C_{p}$, it is easy to check that 
	\begin{align}
	C_{p} = C_{n} \times D^{p-n}.
	\end{align}
\end{proof}

\begin{proof}[Proof of Theorem \ref{thm:two_bounds}]
	According to Theorem \ref{thm:p_g_n}, $$PROJ_{[1,n]}(C_{n}) = PROJ_{[1,n]}(C_{p}).$$ According to \eqref{eqn:upper_lower_bounds}, $PROJ_{[1,n]}(C_{n}) \subseteq C_{max,co}$.  The left of this proof is to show $C_{max,co} \subseteq PROJ_{[1,n] }(C_{n})$.
	
	Let $x\in C_{max,co}$. We want to show that there exists $d_{1:p}$ such that $(x, d_{1:p})\in C_{p}$. Since $C_{max,co}$ is a controlled invariant set of $\mathcal{D}(\Sigma_{B})$ within $ \mathbf{B}\times \R\times D$, for initial state $x(0):= x$ of $\mathcal{D}(\Sigma_{B})$, there exists  $(u(t))_{t=0}^{n-1}$ and $(u_{d})_{t=0}^{n-1}$ such that  $(x(t),u(t),u_{d}(t))\in \mathbf{B}\times \R\times D$ for all $t$ from $0$ to $n$. Based on this observation, we can easily verify that $(x,u_{d}(0), \cdots, u_{d}(n-1))$ satisfies constraints \eqref{eqn:state_nec_cond} and thus satisfies constraints \eqref{eqn:C_p_1} and \eqref{eqn:C_p_2} in Theorem \ref{thm:state_nec_cond} . Then by Theorem \ref{thm:max_inv_set}, $(x,u_{d}(0), \cdots, u_{d}(n-1) )\in C_{max,n}$.
\end{proof}

\balance
\end{document}